\documentclass[journal,draftcls, onecolumn]{IEEEtran}
\usepackage[dvipdfmx]{graphicx}
\usepackage{amsmath}
\usepackage{amssymb}
\usepackage{amsfonts}
\usepackage{mathrsfs}
\usepackage{theorem}
\usepackage{color}
\DeclareMathAlphabet{\bm}{OML}{cmm}{b}{it}
\usepackage[ruled,vlined]{algorithm2e}
\usepackage{algorithmic}

\theorembodyfont{\rmfamily}
\newtheorem{theorem}{Theorem}
\newtheorem{lemma}{Lemma}
\newtheorem{definition}{Definition}
\newtheorem{corollary}{Corollary}
\newtheorem{remark}{Remark}
\newtheorem{proposition}{Proposition}

\newcommand{\qed}{\hfill \IEEEQED}


\newcommand{\bol}[1]{\mathbf{#1}}
\newcommand{\rom}[1]{\mathrm{#1}}

\newcommand{\PI}{\rom{P}_\mathtt{I}}
\newcommand{\PII}{\rom{P}_\mathtt{II}}
\newcommand{\dvar}{d}
\newcommand{\ep}{\varepsilon}



\newcommand{\cC}{{\mathcal C}}
\newcommand{\cD}{{\mathcal D}}

\newcommand{\cP}{{\mathcal P}}

\newcommand{\cS}{{\mathcal S}}
\newcommand{\cT}{{\mathcal T}}

\newcommand{\cX}{{\mathcal X}}
\newcommand{\cY}{{\mathcal Y}}
\newcommand{\cZ}{{\mathcal Z}}

\hyphenation{op-tical net-works semi-conduc-tor}

\begin{document}

\title{
Minimax Converse for Identification via Channels
}

\author{Shun Watanabe}


\maketitle
\begin{abstract}
A minimax converse for the identification via channels is derived. By this converse, a general formula
for the identification capacity, which coincides with the transmission capacity, is proved without the assumption of the strong converse property.
Furthermore, the optimal second-order coding rate of the identification via channels is characterized
when the type I error probability is non-vanishing and the type II error probability is vanishing.
Our converse is built upon the so-called partial channel resolvability approach; however, the minimax argument enables us to
circumvent a flaw reported in the literature.
\end{abstract}

\section{Introduction} \label{section:introduction}

The identification is one of typical functions such that randomization significantly reduces the amount 
of communication necessary to compute those functions; eg. see \cite{Kushilevitz-Nisan}.
Inspired by the work by Ja'Ja' \cite{Jaja85}, Ahlswede and Dueck studied
the problem of identification via noisy channels in \cite{AhlDue89, AhlDue89b}; they have shown that,
with randomization, messages of doubly exponential size  of the block-length can be identified, and the optimal 
coefficient is given by Shannon's transmission capacity. Since then, the problem of identification in the context of
information theory has been studied extensively in the literature \cite{HanVer92, han:93, steinberg:98, Bur00, SteMer:02, ahlswede:02, hayashi:06b, oohama:13, YamUed15, BraLap17}; see \cite{ahlswede:08} for a thorough review. 

In many cases, the difficulties of identification problems arise in proving converse coding theorems. 
Initially, the so-called soft converse was proved in \cite{AhlDue89}; the converse coding theorem was only proved 
under the assumption that the identification error probabilities converge to zero exponentially fast in the block-length. 
Later, Han and Verd\'u proved the strong converse coding theorem of the identification via channels in \cite{HanVer92}.
The crucial step of the proof in \cite{HanVer92} is that we replace general stochastic encoders with stochastic encoders having
specific forms, termed ``$M$-types." 
In \cite{han:93}, Han and Verd\'u further studied this step as a separate problem, which they termed the
channel resolvability, by introducing the information spectrum approach.  

The information spectrum approach provides effective tools to investigate coding problems for 
general non-ergodic sources/channels; see \cite{han:book} for a thorough treatment. 
For the channel resolvability, the optimal rate is upper bounded by the spectral sup-mutual information rate
maximized over input processes. On the other hand, the identification capacity of general channels can be
lower bounded by the spectral inf-mutual information rate maximized over input processes.
When those upper bound and lower bound coincide, which is termed the strong converse property, 
it was shown in \cite{han:93} that the identification capacity and the optimal rate of the channel resolvability
coincide with the transmission capacity of the same channels. Later, it was proved in \cite{hayashi:06b} that,
without the assumption of the strong converse property,
the optimal rate of the channel resolvability is characterized by the spectral sup-mutual information rate
maximized over input processes. 

In an attempt to determine the identification capacity without the assumption of the strong converse property,
Steinberg introduced the partial channel resolvability \cite{steinberg:98}. 
In the partial channel resolvability, we consider a truncated channel so that 
the tail probability of information spectrum is not accumulated twice 
in the argument of relating the channel resolvability to the identification code.
It should be noted that, in the modern terminology, considering the partial response is
essentially equivalent to the technique termed ``smoothing" \cite{renner:05b}. For instance,
the channel resolvability for smoothed channels has been effectively used to derive second-order
bounds on coding problems with side-information \cite{watanabe:13e}.

Using the partial channel resolvability, it was claimed in \cite{steinberg:98} that 
the identification capacity of general channels coincides with the transmission capacity of the same channels. 
However, there is a flaw in the proof of \cite[Lemma 2]{steinberg:98},
which has been reported in \cite[Remark 2]{hayashi:06b}.
Thus, without the assumption of the strong converse property, the identification capacity of general channels has been
an open problem so far. The main purpose of this paper is to provide a remedy to the result claimed in \cite{steinberg:98}.
In fact, our converse is built upon the partial channel resolvability; however, in order to circumvent the aforementioned flaw,
we leverage the minimax argument described below. 

In the past few decades, the argument based on the hypothesis testing has been successfully used to derive
a converse bound on transmission codes of general channels \cite{hayashi:03, polyanskiy:10}, which is termed the meta converse.\footnote{For a detailed
historical perspective on the meta converse, see \cite{Hayashi17}.}
Particularly, a useful feature of the meta converse bound is that we can choose an auxiliary 
output distribution; thus, the expression of the converse bound involves the minimum over the output distribution and the maximum over the input distribution.
For the asymptotic analysis of discrete memoryless channels, the Shannon capacity is recovered from the minimax expression 
by the Topsoe identity \cite{Topsoe:67}.
In fact, the flexibility of choosing the output distribution has been effectively used to
derive finer asymptotic results: the second-order coding rate \cite{hayashi:09, polyanskiy:10} and the third-order coding rate \cite{TomTan:13};
see also \cite{Tan:book}. Also, Polyanskiy proved that the order of minimax
in the meta converse bound can be interchanged under certain regularity conditions \cite{polyanskiy:13}.

In this paper, we derive a minimax converse bound for the identification via channels. 
To that end, we utilize a modified version \cite {hayashi-book:17} of the so-called soft covering lemma reported in \cite{hayashi:06b, oohama:13, cuff:12};
the modified bound on the channel resolvability involves an auxiliary output distribution.
The main contribution of this paper is to apply the flexibility of choosing the auxiliary output distribution 
to the argument connecting the channel resolvability and the identification code.\footnote{Recently, the flexibility of choosing the auxiliary output distribution
was used in a different manner to derive the identification capacity of the covert communication \cite{ZhaTan20}.}
The key difference between our argument and the argument in \cite[Lemma 2]{steinberg:98}
is as follows: in our argument, we consider a truncated channel induced from a fixed auxiliary output distribution;
on the other hand, truncated channels are constructed from output distributions that depend on
input distributions in \cite[Lemma 2]{steinberg:98}. In the former case, we can bound the number of messages 
of an identification code by the number of $M$-types without causing any trouble; this enables us to circumvent the flaw reported in \cite[Remark 2]{hayashi:06b}. 
See Remark \ref{remark:proof-argument} of Section \ref{section:identification-channel-resolvability} for more detail.  

By using the minimax converse bound, we derive the identification capacity of general channels;
it turns out that the identification capacity coincides with the transmission capacity without the assumption of the strong converse property. 
In the derivation of this result, we invoke the aforementioned result in \cite{polyanskiy:13} to interchange the order of
the minimum over the output distribution and the maximum over the input distribution.
Furthermore, we also derive the optimal second-order coding rate of the identification via channels 
when the type I error probability is non-vanishing and the type II error probability is vanishing.

\paragraph*{Notation}

Throughout the paper, random variables (eg.~$X$) and their realizations (eg.~$x$)
are denoted by capital and lower case letters, respectively. All random variables take values in some
finite alphabets which are denoted by the respective calligraphic letters (eg.~$\cX$). The probability 
distribution of random variable $X$ is denoted by $P_X$. Similarly, $X^n=(X_1,\ldots,X_n)$
and $x^n = (x_1,\ldots,x_n)$ denote, respectively, a random vector and its realization in the $n$th
Cartesian product $\cX^n$.
For a finite set $\cS$, the cardinality of $\cS$ is denoted by $|\cS|$. 
For a subset $\cT \subseteq \cS$, the complement $\cS \backslash \cT$ is denoted by $\cT^c$. 
The set of all distributions on $\cX$ is denoted by $\cP(\cX)$. The indicator function is denoted by $\bol{1}[\cdot]$.
Information theoretic quantities are denoted in the same manner as \cite{cover, csiszar-korner:11, han:book}.
All information quantities and rates are evaluated with respect to the natural logarithm.
For given sub-distributions $P$ and $Q$ that are not necessarily normalized, the variational distance is denoted by $d(P,Q):= \frac{1}{2}\sum_x |P(x)-Q(x)|$.

\section{Problem Formulation of Identification via Channels} \label{section:problem-formulation}

In this section, we describe the problem formulation of the identification via channels, and review basic results.
We start with the problem formulation for the single-shot regime. 
Given a channel $W$ from ${\cal X}$ to ${\cal Y}$, the sender tries to transmit one of
$N$ messages; then the receiver shall identify if message $i \in \{1,\ldots,N\}$ was transmitted or not.
The encoder is given by stochastic mappings $P_1,\ldots,P_N \in {\cal P}({\cal X})$, and the decoder is given by
acceptance regions ${\cal D}_1,\ldots,{\cal D}_N \subset {\cal Y}$ for each message. Note that, unlike the standard 
transmission code, the acceptance regions of an identification code need not be disjoint. 
In other words, if the receiver is intended to identify message $i$, there is no need to distinguish 
messages other than $i$. 

For a given identification code $\{ (P_i, {\cal D}_i)\}_{i=1}^N$, the first type error probability is given by
\begin{align*}
\PI := \max_{1 \le i \le N} P_iW({\cal D}_i^c ),
\end{align*}
and the second type error probability is given by
\begin{align*}
\PII := \max_{1 \le i \neq j \le N} P_iW({\cal D}_j ),
\end{align*}
where 
\begin{align*}
P_iW(y) = \sum_{x \in {\cal X}} P_i(x) W(y|x)
\end{align*}
is the output distribution of the channel $W$ corresponding to the input distribution $P_i$.
For given error probabilities $0 \le \varepsilon,\delta < 1$ with $\varepsilon + \delta < 1$, 
an identification code $\{ (P_i, {\cal D}_i)\}_{i=1}^N$ is called $(N,\ep, \delta)$-ID code for channel $W$ if $\PI \le \ep$ and $\PII \le \delta$
are satisfied. 
Then, the optimal code size of identification via channel $W$ is defined by
\begin{align*}
N^\star(\ep,\delta|W) := \sup\big\{ N : \mbox{$(N,\ep,\delta)$-ID code for $W$ exists}  \big\}.
\end{align*}

When we consider the block coding over $n$ uses $W^n$ of a channel, 
it is known that the optimal code size $N^\star(\ep,\delta|W^n)$ grows doubly exponentially 
in the block length $n$. For a discrete memoryless channel, it has been known that
the identification capacity
\begin{align*}
C_{\mathtt{ID}}(\ep,\delta|W) := \liminf_{n\to\infty} \frac{1}{n} \log\log N^\star(\ep,\delta|W^n)
\end{align*}
coincide with the transmission capacity \cite{AhlDue89, HanVer92}, i.e., 
\begin{align*}
C_{\mathtt{ID}}(\ep,\delta|W) = C(W) := \max_{P_X} I(X\wedge Y)
\end{align*}
as long as $\ep+\delta<1$, where $I(X \wedge Y)$ is the mutual information between $(X,Y)$ distributed 
according to $P_{XY}(x,y)=P_X(x)W(y|x)$.
It should be noted that the identification capacity is infinite 
when $\ep+\delta \ge 1$ \cite{HanVer92}.

\section{Hypothesis Testing} \label{section:hypothesis-testing}

In this section, we summarize known facts on the hypothesis testing and the meta converse
that are needed in the rest of the paper. 
Consider a binary hypothesis testing with a null hypothesis $Z \sim P_Z$
and an alternative hypothesis $Z \sim Q_Z$, where $P_Z$ and $Q_Z$ are distribution 
on the same alphabet $\cZ$. Upon observing $Z=z$, we shall decide whether the 
value was generated by the distribution $P_Z$ or the distribution $Q_Z$.
Most general test can be described by a channel $T$ from $\cZ$ to $\{0,1\}$,
where $0$ indicates the null hypothesis and $1$ indicates the alternative hypothesis. 
When $z \in \cZ$ is observed, the test $T$ chooses the null hypothesis with probability $T(0|z)$
and the alternative hypothesis with probability $T(1|z)=1-T(0|z)$. 
Then, the type I error probability of the test is defined by
\begin{align*}
\PI[T] := \sum_z P_Z(z) T(1|z),
\end{align*}
and the type II error probability of the test is defined by
\begin{align*}
\PII[T] := \sum_z Q_Z(z) T(0|z).
\end{align*}
For a given $0 \le \ep < 1$, denote by $\beta_\ep(P_Z,Q_Z)$ the 
optimal type II error probability under the condition that the type I error probability is less than $\ep$, i.e.,
\begin{align*}
\beta_\ep(P_Z,Q_Z) := \inf_{T : \atop \PI[T]\le \ep} \PII[T].
\end{align*}
In fact, since $\beta_\ep(P_Z,Q_Z)$ can be described as a linear programming when $\cZ$ is finite,
the infimum can be attained. 

For a threshold parameter $\gamma \in \mathbb{R}$, the test given by
\begin{align*}
T(0|z) = \bol{1}\bigg[ \log \frac{P_Z(z)}{Q_Z(z)} > \gamma\bigg]
\end{align*}
is termed the likelihood ratio test, also known as the Neyman-Pearson test.
For given $0 \le \ep <1$, let  
\begin{align*}
D_{\mathtt{s}}^\ep(P_Z \|Q_Z) := \sup\bigg\{ \gamma \in \mathbb{R} : \Pr\bigg( \log \frac{P_Z(Z)}{Q_Z(Z)} \le \gamma \bigg) \le \ep \bigg\},
\end{align*}
where the probability is with respect to $Z \sim P_Z$. Note that
the quantity is the supremum of thresholds such that the type I error probability of the likelihood ratio test is less than $\ep$,
and it is referred to as $\ep$-information spectrum divergence \cite{tomamichel:12}. 
This quantity and the optimal type II error probability defined above have the following relationship (eg.~see \cite[Lemma 2.4]{Tan:book});
it can be understood as a variant of the Neyman-Pearson lemma claiming that the likelihood ratio test is essentially optimal. 
\begin{lemma} \label{lemma:connection-Ds-beta}
For a given $0 \le \ep < 1$, it holds that
\begin{align*}
D_{\mathtt{s}}^\ep(P_Z \|Q_Z) 
\le -\log\beta_\ep(P_Z,Q_Z) \le D_{\mathtt{s}}^{\ep+\zeta}(P_Z\|Q_Z) + \log(1/\zeta)
\end{align*}
for any $0< \zeta < 1-\ep$.
\end{lemma}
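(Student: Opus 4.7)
The plan is to prove both inequalities by relating an arbitrary test to the likelihood ratio test with a threshold placed just below (respectively, just above) the relevant information spectrum divergence, using the standard change-of-measure trick on events defined by the log-likelihood ratio.

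For the left inequality, I fix any $\gamma < D_{\mathtt{s}}^\ep(P_Z\|Q_Z)$ and consider the Neyman-Pearson test $T$ defined in the statement, i.e., $T(0|z)=\bol{1}[\log(P_Z(z)/Q_Z(z))>\gamma]$. By the definition of $D_{\mathtt{s}}^\ep$, the type I error satisfies
\[
\PI[T] = \Pr\!\Big(\log \tfrac{P_Z(Z)}{Q_Z(Z)}\le \gamma\Big) \le \ep,
\]
so $T$ is feasible in the optimization defining $\beta_\ep(P_Z,Q_Z)$. On the event $\{\log P_Z/Q_Z > \gamma\}$ we have $Q_Z(z)< e^{-\gamma}P_Z(z)$, so
\[
\PII[T]=\sum_z Q_Z(z)\bol{1}\!\big[\log \tfrac{P_Z(z)}{Q_Z(z)}>\gamma\big]\le e^{-\gamma}\sum_z P_Z(z)\le e^{-\gamma}.
\]
Hence $\beta_\ep(P_Z,Q_Z)\le e^{-\gamma}$, and letting $\gamma\uparrow D_{\mathtt{s}}^\ep(P_Z\|Q_Z)$ gives the left inequality.

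For the right inequality, I take an arbitrary test $T$ with $\PI[T]\le\ep$ and lower-bound its type II error. Fix any $\gamma > D_{\mathtt{s}}^{\ep+\zeta}(P_Z\|Q_Z)$, so that by the definition of the supremum, the set $\cA:=\{z:\log P_Z(z)/Q_Z(z)\le\gamma\}$ satisfies $P_Z(\cA)>\ep+\zeta$. On $\cA$ we have $Q_Z(z)\ge e^{-\gamma}P_Z(z)$, so
\[
\PII[T]\ge \sum_{z\in\cA}Q_Z(z)T(0|z)\ge e^{-\gamma}\sum_{z\in\cA}P_Z(z)T(0|z)\ge e^{-\gamma}\Big(P_Z(\cA)-\PI[T]\Big)>e^{-\gamma}\zeta,
\]
where the second-to-last step uses $T(0|z)=1-T(1|z)$ and $\sum_{z\in\cA}P_Z(z)T(1|z)\le \PI[T]\le\ep$. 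Taking the infimum over feasible $T$ yields $\beta_\ep(P_Z,Q_Z)\ge \zeta e^{-\gamma}$; letting $\gamma\downarrow D_{\mathtt{s}}^{\ep+\zeta}(P_Z\|Q_Z)$ and taking $-\log$ gives the right inequality.

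The only subtlety is handling the strict/non-strict inequalities coming from the supremum in the definition of $D_{\mathtt{s}}^\ep$, which is resolved by approaching the threshold from the appropriate side in each direction. Aside from that bookkeeping, the argument is a direct application of the change-of-measure inequality on a log-likelihood level set, and no deeper structural fact about the hypothesis testing problem is needed.
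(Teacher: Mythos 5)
Your proof is correct. The paper does not supply its own argument for this lemma---it simply cites \cite[Lemma 2.4]{Tan:book}---and what you have written is exactly the standard proof that the citation points to: for the lower bound, the Neyman--Pearson test with threshold just below $D_{\mathtt{s}}^\ep$ is feasible and its type II error is bounded by the change-of-measure inequality on the level set $\{\log P_Z/Q_Z>\gamma\}$; for the upper bound, the change-of-measure inequality on the complementary set $\{\log P_Z/Q_Z\le\gamma\}$, together with the union-type bound $\sum_{z\in\cA}P_Z(z)T(0|z)\ge P_Z(\cA)-\PI[T]$, lower-bounds the type II error of every feasible test. You also handle the only delicate point---that the supremum defining $D_{\mathtt{s}}^\ep$ need not be attained---correctly, by approaching the threshold from below in one direction and from above in the other.
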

This lemma enables us to use the two quantities almost interchangeably. 

As we have mentioned in Section \ref{section:introduction}, in the past few decades, 
the hypothesis testing has become a useful tool to derive a converse bound on transmission codes
over a channel $W$ from $\cX$ to $\cY$. 
For such an application, we consider the hypothesis testing between the null hypothesis
\begin{align*}
P\times W(x,y) := P(x) W(y|x)
\end{align*} 
and the alternative hypothesis 
\begin{align*}
P\times Q(x,y) := P(x) Q(y),
\end{align*}
where $P \in \cP(\cX)$ and $Q \in \cP(\cY)$ are given input/output distributions. 
More specifically, the optimal coding rate of transmission codes is bounded in terms of 
\begin{align} \label{eq:optimization-problem}
\inf_{P \in \cP(\cX)} \sup_{Q \in \cP(\cY)} \beta_\ep(P \times W, P\times Q).
\end{align} 
It can be easily verified from the definition that $\beta_\ep(P \times W, P\times Q)$ is concave 
with respect to the output distribution $Q \in \cP(\cY)$. On the other hand, it was proved in \cite{polyanskiy:13} that $\beta_\ep(P \times W, P\times Q)$
is convex with respect to the input distribution $P \in \cP(\cX)$. Thus, $\beta_\ep(P \times W, P\times Q)$ is a convex-concave function
on $\cP(\cX) \times \cP(\cY)$, and regularity conditions on the saddle-point property were discussed in \cite{polyanskiy:13};
particularly, since $\cP(\cX)$ and $\cP(\cY)$ are compact for finite alphabets $\cX$ and $\cY$, the following saddle-point property 
follows from the classic min-max theorem. 
\begin{lemma}[\cite{polyanskiy:13}] \label{lemma:min-max-of-beta}
For a given $0 \le \ep <1$, the optimal value in \eqref{eq:optimization-problem} is attainable and
\begin{align*} 
\min_{P \in {\cal P}({\cal X})} \max_{Q \in {\cal P}({\cal Y})} \beta_\varepsilon(P\times W, P\times Q) 
= \max_{Q \in {\cal P}({\cal Y})} \min_{P \in {\cal P}({\cal X})} \beta_\varepsilon(P\times W, P\times Q).
\end{align*}
\end{lemma}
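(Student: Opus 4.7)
The plan is to apply Sion's minimax theorem on the compact convex product $\cP(\cX)\times\cP(\cY)$. First I would record the convexity--concavity structure of the payoff $\beta_\ep$. Concavity in $Q$ for each fixed $P$ follows from the representation
\begin{align*}
\beta_\ep(P\times W, P\times Q) = \min_{T:\, \PI[T] \le \ep} \sum_{x,y} P(x) Q(y) T(0|x,y),
\end{align*}
which exhibits $\beta_\ep$ as the pointwise infimum of a family of affine functions of $Q$. Convexity in $P$ for each fixed $Q$ is the property from \cite{polyanskiy:13} quoted immediately above the lemma statement.

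Next I would verify the semicontinuity needed for Sion. Since the alphabets are finite, $\beta_\ep(P\times W, P\times Q)$ is the optimal value of a finite-dimensional linear program whose feasible set (tests $T\in[0,1]^{|\cX\times\cY|}$ obeying one linear inequality) is non-empty and compact, and whose coefficient data depend continuously on $(P, Q)$. Standard LP sensitivity therefore yields joint continuity of $\beta_\ep$ on $\cP(\cX)\times\cP(\cY)$; in particular, $\beta_\ep$ is lower semicontinuous in $P$ and upper semicontinuous in $Q$, which are precisely the hypotheses of Sion's theorem.

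Combining convexity in $P$, concavity in $Q$, joint continuity, and compactness of the two simplexes, Sion's minimax theorem yields the claimed interchange of min and max. The outer extrema are attained because the envelopes $P\mapsto \max_Q \beta_\ep(P\times W, P\times Q)$ and $Q\mapsto \min_P \beta_\ep(P\times W, P\times Q)$ are continuous on compact domains. There is no real obstacle in the argument; all analytic work is absorbed into the convexity quoted from \cite{polyanskiy:13} and the elementary LP continuity, with Sion's theorem doing the rest.
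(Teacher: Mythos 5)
Your proposal follows essentially the same route the paper sketches before the lemma statement: concavity in $Q$ (which you derive cleanly as the pointwise infimum of affine functions of $Q$), convexity in $P$ quoted from \cite{polyanskiy:13}, compactness of the two simplices, and an invocation of a classical minimax theorem. So as far as the high-level structure goes, this matches the paper, which itself delegates the result to \cite{polyanskiy:13}.

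The one step worth tightening is the appeal to ``standard LP sensitivity'' for joint continuity. The Neyman--Pearson LP has a feasible set $\{T : \PI[T]\le\ep\}$ that varies with $P$ through the constraint, not only through the objective; value functions of parametric LPs with parameter-dependent constraints are \emph{not} automatically continuous. What saves you here is that the feasible-set correspondence is both upper and lower hemicontinuous: u.h.c.\ because it has closed graph with compact values, and l.h.c.\ because the all-accept test $T(0|\cdot)\equiv 1$ satisfies $\PI[T]=0<\ep$ strictly whenever $\ep>0$ (a Slater point), after which Berge's maximum theorem gives continuity of the value; the boundary case $\ep=0$ needs a short direct check (where $\beta_0(P\times W,P\times Q)=\sum_{(x,y)\in\mathrm{supp}(P\times W)}P(x)Q(y)$, visibly continuous in $P$). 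This is exactly the ``regularity'' that \cite{polyanskiy:13} verifies in the finite-alphabet case, so your proof is sound, but the phrasing undersells the care that step requires.
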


When we evaluate asymptotic behavior of coding rates for a DMC, it is more convenient to
use the $\ep$-information spectrum divergence. Particularly, we will use 
the following symbol-wise relaxation (eg.~see \cite{TomTan:13}):
\begin{align} \label{eq:symbol-wise-relaxation}
D_{\mathtt{s}}^\ep(P\times W \| P \times Q) \le \max_{x\in {\cal X}} D_{\mathtt{s}}^\ep( W(\cdot |x) \| Q)
\end{align}
for any $Q \in \cP(\cY)$.

\section{Main Result: Minimax Converse for Identification via Channels} \label{section:identification-channel-resolvability}

In this section, we present our main result, i.e., the minimax converse bound on the identification via channels.
To that end, we first explain the problem of channel resolvability. 

For an integer $M$, a distribution $P \in \cP(\cX)$ is said to be an $M$-type
if $P(x)$ is an integer multiple of $1/M$ for every $x\in \cX$.
Then, $(N,\ep,\delta)$-ID code $\{ (P_i,\cD_i)\}_{i=1}^N$ is said to be $M$-canonical if
$P_i$ is an $M$-type for every $1\le i \le N$. For $M$-canonical $(N,\ep,\delta)$-ID code 
with $\ep+\delta<1$, it is not difficult to see that all $P_i$s are distinct; in fact, if there exist $i$ and $j$
such that $P_i = P_j$, then
\begin{align*}
1-\ep \le P_iW(\cD_i) = P_jW(\cD_i) \le \delta,
\end{align*}
which contradict $\ep + \delta <1$. Since the number of $M$-types on $\cX$ 
is at most $|\cX|^M$, we must have $N\le |\cX|^M$ for $M$-canonical ID code. 

In \cite{han:93}, among other motivations, the channel resolvability was introduced as a tool handle
general ID codes by relating their analysis to that of $M$-canonical codes. 
In the channel resolvability problem, we shall approximate the output distribution $PW$
of an arbitrarily given input distribution $P \in \cP(\cX)$ by the output distribution $\tilde{P}W$ of an $M$-type $\tilde{P}$ 
so that 
\begin{align*}
d(\tilde{P}W, PW) \le \zeta
\end{align*}
is satisfied for a prescribed approximation error $\zeta$. If such an approximation is realized, then we can replace 
each $P_i$ with an $M$-type $\tilde{P}_i$, and use the above mentioned counting argument for $M$-canonical codes. 

In an attempt to derive a tighter converse bound than that in \cite{han:93}, 
the partial channel resolvability was introduced in \cite{steinberg:98}. 
For a given subset ${\cal S} \subset {\cal X} \times {\cal Y}$, let us introduce the truncated channel
\begin{align*}
W^\cS(y|x) := W(y|x) \bol{1}[(x,y) \in {\cal S}]
\end{align*}
and the truncated output distribution
\begin{align*}
P W^{\cal S}(y) := \sum_{x \in {\cal X}} P(x) W(y|x) \bol{1}[(x,y) \in {\cal S}].
\end{align*}
Note that $PW^\cS$ is a sub-distribution, i.e., it may not add up to $1$, and 
it is referred to as the partial response of the input distribution $P$.
It can be immediately verified that
\begin{align} \label{eq:truncation-error}
\dvar(PW^{\cal S}, PW) = \frac{P\times W({\cal S}^c)}{2}.
\end{align}
In the partial channel resolvability problem, we shall approximate the partial response $PW^\cS$
of an arbitrarily given input distribution $P \in \cP(\cX)$ by the partial response $\tilde{P}W^\cS$ of 
an $M$-type $\tilde{P}$ so that
\begin{align*}
d(\tilde{P}W^\cS, PW^\cS) \le \zeta
\end{align*}
is satisfied for a prescribed approximation error $\zeta$.

A standard approach of constructing the (partial) channel resolvability code 
is to randomly generate $M$ symbols $x_1,\ldots, x_M$ according to distribution $P$.
The performance analysis of such a random code construction is referred to as the soft covering lemma \cite{cuff:12}.
The following lemma is a variant of the soft covering lemma, and it can be derived in almost the same manner
as \cite{hayashi:06b, oohama:13, cuff:12} with a simple modification. 
Even though the modified version is available in the literature \cite{hayashi-book:17, ZhaTan20}, we provide a proof here for completeness.
\begin{lemma} \label{lemma:soft-covering}
For arbitrarily given $Q \in {\cal P}({\cal Y})$ and $\gamma \in \mathbb{R}$, let
\begin{align} \label{eq:partial-set}
{\cal S} = {\cal S}(Q,\gamma) := \bigg\{ (x,y) \in {\cal X}\times {\cal Y}: \log \frac{W(y|x)}{Q(y)} \le \gamma \bigg\}.
\end{align}
Then, for a given input distribution $P \in {\cal P}({\cal X})$, there exists an $M$-type $\tilde{P}$ such that
\begin{align} \label{eq:soft-covering-statement}
\dvar(\tilde{P}W^{\cal S}, PW^{\cal S}) \le \frac{1}{2} \sqrt{\frac{e^\gamma}{M}}
\end{align}
\end{lemma}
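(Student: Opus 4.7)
The plan is a standard random coding argument, with the twist that the truncation to $\cS$ tames the variance via the auxiliary distribution $Q$. I would draw $X_1,\ldots,X_M$ i.i.d.\ from $P$ and take $\tilde P$ to be the empirical distribution $\tilde P(x)=\frac{1}{M}\sum_{i=1}^M \bol{1}[X_i=x]$, which is automatically an $M$-type. It then suffices to prove that
$$\mathbb{E}\bigl[\,2\dvar(\tilde P W^{\cS},\,PW^{\cS})\bigr]\le \sqrt{e^{\gamma}/M},$$
since at least one realization of the random code will then satisfy \eqref{eq:soft-covering-statement}.

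Fix $y\in\cY$ and write $\tilde P W^{\cS}(y)=\frac{1}{M}\sum_{i=1}^M Y_i$ with $Y_i:=W(y|X_i)\,\bol{1}[(X_i,y)\in\cS]$. The $Y_i$ are i.i.d.\ non-negative with common mean $PW^{\cS}(y)$. The key (and only subtle) observation is that on $\cS$, the very definition \eqref{eq:partial-set} enforces the pointwise cap $W(y|x)\le e^{\gamma}Q(y)$; hence
$$\mathbb{E}[Y_1^2]\le e^{\gamma}Q(y)\,PW^{\cS}(y),\qquad \mathrm{Var}\bigl[\tilde P W^{\cS}(y)\bigr]\le \frac{e^{\gamma}Q(y)\,PW^{\cS}(y)}{M}.$$

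Next I would apply Jensen's inequality ($\mathbb{E}|Z|\le \sqrt{\mathbb{E}Z^{2}}$) to pass from absolute deviation to standard deviation, sum over $y$, and finish with Cauchy--Schwarz:
$$\mathbb{E}\bigl[\,2\dvar(\tilde P W^{\cS},\,PW^{\cS})\bigr]\,\le\, \sum_y\sqrt{\frac{e^{\gamma}Q(y)\,PW^{\cS}(y)}{M}}\,\le\,\sqrt{\frac{e^{\gamma}}{M}}\,\sqrt{\sum_y Q(y)}\,\sqrt{\sum_y PW^{\cS}(y)}\,\le\,\sqrt{\frac{e^{\gamma}}{M}},$$
where the last inequality uses $\sum_y Q(y)=1$ together with $\sum_y PW^{\cS}(y)\le 1$ (since $PW^{\cS}$ is a sub-distribution). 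A standard existence argument then extracts a deterministic $M$-type $\tilde P$ achieving the bound.

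The only place where anything nontrivial happens is the variance estimate. Without truncation, the likelihood ratio $W(y|x)/Q(y)$ could be unbounded and the second moment of $Y_1$ would not be controllable in terms of its mean. It is precisely the pointwise cap built into $\cS(Q,\gamma)$ that couples the variance to the auxiliary $Q$, and this coupling is what makes the subsequent flexibility in choosing $Q$ — the backbone of the minimax strategy in Section \ref{section:identification-channel-resolvability} — actually useful.
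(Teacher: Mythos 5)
Your proof is correct, and it differs from the paper's in a small but genuine way: the bracketing of Jensen and Cauchy--Schwarz is swapped. The paper first applies Jensen globally, $\mathbb{E}[\dvar]\le\sqrt{\mathbb{E}[\dvar^2]}$, then inserts $Q$ as a weight and applies Cauchy--Schwarz inside the square,
\begin{align*}
\Big(\sum_y |\tilde P W^{\cS}(y)-PW^{\cS}(y)|\Big)^2 \le \sum_y Q(y)\Big|\tfrac{\tilde P W^{\cS}(y)-PW^{\cS}(y)}{Q(y)}\Big|^2,
\end{align*}
reducing the problem to a single $\chi^2$-type second-moment computation against $Q$, which is then bounded by $e^\gamma/M$ using the cap $W(y|x)\le e^\gamma Q(y)$ on $\cS$ and the independence of $X_1,\ldots,X_M$. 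You instead work coordinate-wise: you bound $\mathbb{E}|\tilde P W^{\cS}(y)-PW^{\cS}(y)|$ by the per-$y$ standard deviation $\sqrt{e^\gamma Q(y)PW^{\cS}(y)/M}$, sum over $y$, and only then invoke Cauchy--Schwarz on $\sum_y\sqrt{Q(y)}\sqrt{PW^{\cS}(y)}\le 1$. The two routes exploit the same two ingredients (the pointwise cap from $\cS(Q,\gamma)$ and the variance of an i.i.d.\ average) and land on the same constant $\tfrac12\sqrt{e^\gamma/M}$, so neither buys a tighter bound here; your version arguably makes the role of the per-symbol variance more explicit and is somewhat more modular, while the paper's global $L^2$ form is the one that more readily matches the usual soft-covering/channel-resolvability template in \cite{hayashi:06b, cuff:12}. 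One small point worth stating explicitly in your write-up, which you implicitly rely on (as does the paper, via its footnote): if $Q(y)=0$ then $(x,y)\notin\cS$ for all $x$, so both $\tilde P W^{\cS}(y)$ and $PW^{\cS}(y)$ vanish and the coordinate contributes nothing.
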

\begin{proof}
Let $\cC = \{X_1,\ldots,X_M\}$ be a codebook such that each $X_i$ is randomly generated with distribution $P$.
Then, we define $M$-type $\tilde{P} = \tilde{P}_\cC$ by 
\begin{align*}
\tilde{P}(x) = \frac{1}{M} \sum_{i=1}^M \bol{1}[X_i = x].
\end{align*} 
We shall evaluate the approximation error averaged over the random generation of the codebook $\cC$.
By Jensen's inequality and the convexity of $t\to t^2$, we have
\begin{align}
\mathbb{E}_{\cC}\big[ d(\tilde{P}W^\cS, PW^\cS) \big]^2 \le \mathbb{E}_\cC\big[ d(\tilde{P}W^\cS, PW^\cS)^2\big]. 
\label{eq:soft-covering-proof-1}
\end{align}
Then, we have
\begin{align}
\lefteqn{ 4 \mathbb{E}\big[ d(\tilde{P}W^\cS, PW^\cS)^2 \big] } \nonumber \\
&= \mathbb{E}_{\cC}\bigg[ \bigg( \sum_y \big| \tilde{P}W^\cS(y) - PW^\cS(y) \big| \bigg)^2 \bigg] \nonumber \\
&= \mathbb{E}_{\cC}\bigg[ \bigg( \sum_y \sqrt{Q(y)}\sqrt{Q(y)} \bigg| \frac{\tilde{P}W^\cS(y) - PW^\cS(y)}{Q(y)} \bigg|  \bigg)^2 \bigg] \nonumber \\
&\le \mathbb{E}_{\cC}\bigg[ \sum_y Q(y) \bigg| \frac{\tilde{P}W^\cS(y) - PW^\cS(y)}{Q(y)} \bigg|^2 \bigg],
 \label{eq:soft-covering-proof-2}
\end{align}
where the summation $y$ is taken over $\mathtt{supp}(Q)$,\footnote{Note that $\tilde{P}W^\cS(y)=PW^\cS(y)=0$ whenever $Q(y)=0$
from the definition of $\cS$.}
and the last inequality follows from the Cauchy-Schwarz inequality. Denoting $Y \sim Q$, we can rewrite the above formula as
\begin{align}
\lefteqn{ \mathbb{E}_{\cC}\bigg[ \sum_y Q(y) \bigg| \frac{\tilde{P}W^\cS(y) - PW^\cS(y)}{Q(y)} \bigg|^2 \bigg] } \nonumber \\
&= \mathbb{E}_Y \mathbb{E}_{\cC}\bigg[ \bigg( \frac{\tilde{P}W^\cS(Y)}{Q(Y)} - \frac{PW^\cS(Y)}{Q(Y)} \bigg)^2 \bigg] \nonumber \\
&= \mathbb{E}_Y \mathbb{E}_{\cC}\bigg[ \bigg( \sum_{i=1}^M \frac{1}{M} \frac{W^\cS(Y|X_i)}{Q(Y)} - \frac{PW^\cS(Y)}{Q(Y)} \bigg)^2 \bigg] \nonumber \\
&=  \mathbb{E}_Y \mathbb{E}_{\cC}\bigg[ \frac{1}{M^2} \sum_{i=1}^M \bigg( \frac{W^\cS(Y|X_i)}{Q(Y)} \bigg)^2
 + \sum_{i,j =1: \atop i\neq j}^M \frac{1}{M^2} \frac{W^\cS(Y|X_i)}{Q(Y)} \frac{W^\cS(Y|X_j)}{Q(Y)}  \nonumber \\
&~~~  - \sum_{i=1}^M \frac{2}{M} \frac{W^\cS(Y|X_i)}{Q(Y)} \frac{PW^\cS(Y)}{Q(Y)} + \bigg( \frac{PW^\cS(Y)}{Q(Y)} \bigg)^2 \bigg]. \label{eq:soft-covering-proof-3} 
\end{align}
Furthermore, by noting that, for $i\neq j$,
\begin{align*}
\mathbb{E}_{\cC}\bigg[ \frac{W^\cS(Y|X_i)}{Q(Y)}  \bigg] &= \frac{PW^\cS(Y)}{Q(Y)}, \\
\mathbb{E}_{\cC}\bigg[ \frac{W^\cS(Y|X_i)}{Q(Y)} \frac{W^\cS(Y|X_j)}{Q(Y)} \bigg] 
 &= \mathbb{E}_{X_i} \bigg[ \frac{W^\cS(Y|X_i)}{Q(Y)} \bigg] \mathbb{E}_{X_j} \bigg[ \frac{W^\cS(Y|X_j)}{Q(Y)} \bigg] = \bigg( \frac{PW^\cS(Y)}{Q(Y)} \bigg)^2,
\end{align*}
we can rewrite \eqref{eq:soft-covering-proof-3} as 
\begin{align}
\lefteqn{ \frac{1}{M} \mathbb{E}_Y \mathbb{E}_{X}\bigg[ \bigg( \frac{W^\cS(Y|X)}{Q(Y)} \bigg)^2 -  \bigg( \frac{PW^\cS(Y)}{Q(Y)} \bigg)^2 \bigg] } \nonumber \\
&\le \frac{1}{M} \mathbb{E}_Y \mathbb{E}_{X}\bigg[ \bigg( \frac{W^\cS(Y|X)}{Q(Y)} \bigg)^2 \bigg] \nonumber \\
&= \frac{1}{M} \sum_{x,y} P(x) \frac{W(y|x)^2}{Q(y)} \bol{1}[(x,y) \in \cS] \nonumber \\
&\le \frac{1}{M} \sum_{x,y} P(x) W(y|x) e^\gamma \bol{1}[(x,y) \in \cS] \nonumber \\
&\le \frac{e^\gamma}{M}, \label{eq:soft-covering-proof-4}
\end{align}
where $X \sim P$.
By combining \eqref{eq:soft-covering-proof-1}, \eqref{eq:soft-covering-proof-2}, \eqref{eq:soft-covering-proof-3}, and \eqref{eq:soft-covering-proof-4},
we have
\begin{align*}
\mathbb{E}_{\cC}\big[ d(\tilde{P}W^\cS, PW^\cS) \big] \le \frac{1}{2}\sqrt{\frac{e^\gamma}{M}},
\end{align*}
which implies the existence of an $M$-type $\tilde{P}$ satisfying \eqref{eq:soft-covering-statement}.
\end{proof}

The difference between Lemma \ref{lemma:soft-covering} and the standard soft covering lemmas
is that we can arbitrarily choose an auxiliary output distribution $Q \in \cP(\cY)$ instead of the output 
distribution $PW$ that corresponds to the input distribution $P$. A similar usage of the auxiliary distribution
has been known in the context of a related problem, the privacy amplification \cite{renner:05b}.

The main innovation of this paper is that we use the above mentioned flexibility of choosing
the auxiliary output distribution to derive a novel converse bound on the ID code.
\begin{theorem} \label{theorem:connection}
For arbitrarily given $Q \in {\cal P}({\cal Y})$ and $\gamma \in \mathbb{R}$, let ${\cal S} = {\cal S}(Q,\gamma)$ be
defined as in \eqref{eq:partial-set}. Then, for an arbitrary integer $M$, any $(N,\ep,\delta)$-ID code with $N > |{\cal X}|^M$ must satisfy
\begin{align} \label{eq:connection-assumption}
\ep + \delta \ge \inf_{P \in {\cal P}({\cal X})} P\times W({\cal S}) - \sqrt{\frac{e^\gamma}{M}}.
\end{align}
\end{theorem}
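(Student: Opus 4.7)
The plan is to turn the $(N,\ep,\delta)$-ID code into a code that behaves like an $M$-canonical one by applying Lemma \ref{lemma:soft-covering} to each encoder $P_i$ with the \emph{same} auxiliary output distribution $Q$, and then to combine a pigeonhole step with the ID guarantees. Concretely, for each $i \in \{1,\ldots,N\}$ I would invoke Lemma \ref{lemma:soft-covering} with the prescribed $Q$ and $\gamma$ to obtain an $M$-type $\tilde{P}_i$ with
\begin{equation*}
\dvar(\tilde{P}_i W^\cS, P_i W^\cS) \le \frac{1}{2}\sqrt{\frac{e^\gamma}{M}}.
\end{equation*}
The crucial point, and the departure from the approach of \cite{steinberg:98}, is that $\cS = \cS(Q,\gamma)$ does not depend on $i$, so all the $\tilde{P}_i$ lie in a single collection of $M$-types on $\cX$ whose cardinality is at most $|\cX|^M$. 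Since $N > |\cX|^M$ by hypothesis, the pigeonhole principle yields distinct indices $i \neq j$ with $\tilde{P}_i = \tilde{P}_j =: \tilde{P}$, and thus $\tilde{P} W^\cS$ is a single sub-distribution that is simultaneously close in variational distance to both $P_i W^\cS$ and $P_j W^\cS$.

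For this colliding pair, I would combine the ID guarantees with the truncation identity \eqref{eq:truncation-error} to write
$P_i W^\cS(\cD_i) \ge P_i W(\cD_i) - P_i \times W(\cS^c) \ge P_i \times W(\cS) - \ep$ and $P_j W^\cS(\cD_i) \le P_j W(\cD_i) \le \delta$. Using the sub-distribution estimate $|\tilde{P} W^\cS(\cA) - P_k W^\cS(\cA)| \le 2\,\dvar(\tilde{P}_k W^\cS, P_k W^\cS)$ for $k \in \{i,j\}$ at $\cA = \cD_i$, together with the equality $\tilde{P}_i W^\cS(\cD_i) = \tilde{P}_j W^\cS(\cD_i)$, these chain into $P_i \times W(\cS) - \ep - \delta \le \sqrt{e^\gamma/M}$. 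Finally, since $\inf_{P \in \cP(\cX)} P\times W(\cS) \le P_i \times W(\cS)$, taking the infimum on the left produces \eqref{eq:connection-assumption}.

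I expect the main obstacle to be the sub-distribution bookkeeping: because $PW^\cS$ is not normalized, the conversion from $\dvar$ to a one-sided deviation on $\cD_i$ naively picks up a factor of two, and matching exactly the constant $\sqrt{e^\gamma/M}$ stated in \eqref{eq:connection-assumption} seems to require chaining the two approximations into a single telescoping difference rather than applying the triangle inequality twice. Conceptually, however, the argument is clean, because fixing $\cS$ once and for all through the auxiliary $Q$ decouples the counting of $M$-types from the per-encoder approximation; this is precisely the mechanism that circumvents the flaw in \cite[Lemma~2]{steinberg:98} pointed out in \cite[Remark~2]{hayashi:06b} and highlighted in the introduction.
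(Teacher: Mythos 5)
Your high-level strategy is the same as the paper's: apply Lemma~\ref{lemma:soft-covering} to each $P_i$ with the \emph{same} $Q$ and $\gamma$ (so $\cS$ is fixed), pigeonhole $\tilde{P}_1,\dots,\tilde{P}_N$ among the at most $|\cX|^M$ $M$-types to find $i \neq j$ with $\tilde P_i=\tilde P_j$, and compare the resulting closeness of truncated outputs against the $1-\ep-\delta$ separation that the ID guarantees force between $P_iW$ and $P_jW$. You have also correctly identified why fixing $\cS$ via the auxiliary $Q$ is exactly what makes the pigeonhole count valid, which is the point of the remark following the theorem.

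Where your version diverges is the final comparison step, and you are right to flag the bookkeeping as the risk. The paper avoids the one-sided issue entirely by never restricting to $\cD_i$ after the first line: it uses
$d(P_iW,P_jW)\ge P_iW(\cD_i)-P_jW(\cD_i)\ge 1-\ep-\delta$, then bounds $d(P_iW,P_jW)$ by the four-term triangle chain
$d(P_iW,P_iW^\cS)+d(P_iW^\cS,\tilde P W^\cS)+d(\tilde P W^\cS,P_jW^\cS)+d(P_jW^\cS,P_jW)$,
in which the outer two terms equal $\frac{1}{2}P_i\times W(\cS^c)$ and $\frac{1}{2}P_j\times W(\cS^c)$ by the identity~\eqref{eq:truncation-error} and the inner two are each $\le\frac{1}{2}\sqrt{e^\gamma/M}$ by Lemma~\ref{lemma:soft-covering}; rearranging immediately gives $\ep+\delta\ge \frac{1}{2}\big(P_i\times W(\cS)+P_j\times W(\cS)\big)-\sqrt{e^\gamma/M}\ge\inf_P P\times W(\cS)-\sqrt{e^\gamma/M}$.

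Your route via one-sided deviations on $\cD_i$ does work, but not as you wrote it: using $|\mu(\cD_i)-\nu(\cD_i)|\le 2d(\mu,\nu)$ twice costs $2\sqrt{e^\gamma/M}$, and your displayed intermediate inequality $P_i\times W(\cS)-\ep-\delta\le\sqrt{e^\gamma/M}$ is not what comes out. The telescoping you allude to is the right fix. For sub-distributions, $\mu(\cA)-\nu(\cA)\le\sum_y\big(\mu(y)-\nu(y)\big)^{+}=d(\mu,\nu)+\tfrac12\big(\mu(\cY)-\nu(\cY)\big)$, so the two mass-discrepancy terms cancel the intermediate $\tilde P W^\cS$ and leave $\tfrac12\big(P_iW^\cS(\cY)-P_jW^\cS(\cY)\big)=\tfrac12\big(P_i\times W(\cS)-P_j\times W(\cS)\big)$; this replaces your $P_i\times W(\cS)$ by the average $\tfrac12\big(P_i\times W(\cS)+P_j\times W(\cS)\big)$ and recovers exactly the paper's bound with the correct constant $\sqrt{e^\gamma/M}$. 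So your approach is sound and essentially isomorphic to the paper's; the paper's chain through $d(P_iW,P_jW)$ is just the cleaner way to keep the constants exact because~\eqref{eq:truncation-error} is an equality, not an estimate.
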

\begin{proof}
For an arbitrarily given $(N,\ep,\delta)$-ID code $\{ (P_i, {\cal D}_i) \}_{i=1}^N$, we have
\begin{align}
\dvar( P_iW, P_jW) &\ge P_i W({\cal D}_i) - P_j W({\cal D}_i) \nonumber \\
&\ge 1 - \ep - \delta \label{eq:proof-connection-1}
\end{align}
for every $i \neq j$. By applying Lemma \ref{lemma:soft-covering} for each $P_i$,
we can find $M$-type $\tilde{P}_i$ such that 
\begin{align}
\dvar( \tilde{P}_iW^{\cal S}, P_iW^{\cal S}) \le \frac{1}{2} \sqrt{\frac{e^\gamma}{M}}. \label{eq:proof-connection-2}
\end{align}
Since the number of distinct $M$-types is upper bonded by $|{\cal X}|^M$ and since $N > |{\cal X}|^M$ by assumption,
there must exist a pair $i$ and $j$ such that $\tilde{P}_i = \tilde{P}_j$. For such a pair, by applying the triangular inequality twice, we have
\begin{align}
\dvar( P_iW, P_jW) &\le \dvar( P_i W, \tilde{P}_iW^{\cal S}) + \dvar(\tilde{P}_i W^{\cal S}, \tilde{P}_j W^{\cal S}) + \dvar(\tilde{P}_jW^{\cal S},P_j W) \nonumber \\
&= \dvar( P_i W, \tilde{P}_iW^{\cal S}) + \dvar(\tilde{P}_jW^{\cal S},P_j W) \nonumber \\
&\le \dvar( P_i W, P_i W^{\cal S}) + \dvar( P_i W^{\cal S}, \tilde{P}_iW^{\cal S}) + \dvar(\tilde{P}_jW^{\cal S},P_jW^{\cal S}) + \dvar(P_jW^{\cal S},P_jW) \nonumber \\
&\le \frac{P_i\times W({\cal S}^c) + P_j \times W({\cal S}^c)}{2} + \sqrt{\frac{e^\gamma}{M}} \nonumber \\
&\le \sup_{P \in {\cal P}({\cal X})} P\times W({\cal S}^c) + \sqrt{\frac{e^\gamma}{M}}, \label{eq:proof-connection-3}
\end{align}
where the second last inequality follows from \eqref{eq:proof-connection-2} and \eqref{eq:truncation-error}.
Then, \eqref{eq:proof-connection-3} together with \eqref{eq:proof-connection-1} imply \eqref{eq:connection-assumption}.
\end{proof}

\begin{remark}
Without using the partial channel resolvability, it can be proved that 
any $(N,\ep,\delta)$-ID code with $N > |\cX|^M$ must satisfy\footnote{For instance, see Eq.~(17) and Lemma 3 of \cite{hayashi:06b}.}
\begin{align} \label{eq:existing-bound}
\ep + \delta \ge \inf_{P \in \cP(\cX)} \big[ 1- 2 P\times W(\cT_P^c) \big] - \sqrt{\frac{e^\gamma}{M}},
\end{align}
where 
\begin{align} \label{eq:typical-set-PW}
\cT_P = \cT(P,\gamma) := \bigg\{ (x,y) \in \cX \times \cY : \log \frac{W(y|x)}{PW(y)} \le \gamma \bigg\}.
\end{align}
The factor $2$ of the first term in \eqref{eq:existing-bound} has prevented us from deriving a general formula of the ID-capacity
without the strong converse property. 
\end{remark}

\begin{remark} \label{remark:proof-argument}
The proof of Theorem \ref{theorem:connection} is inspired in part from 
the argument in \cite[Lemma 2]{steinberg:98}, 
which has a flaw reported in \cite[Remark 2]{hayashi:06b}. 
A crucial difference between our argument and that in \cite[Lemma 2]{steinberg:98}
is that the set ${\cal S}$ for a fixed $Q$ is used to construct the truncated channel $W^\cS$ in our argument, while 
the set ${\cal T}_{P_i}$ defined by \eqref{eq:typical-set-PW} is used to construct 
the truncated channel $W^{\cT_{P_i}}$ for each $i$ in \cite[Lemma 2]{steinberg:98}.
In the former case, $\tilde{P}_i=\tilde{P}_j$ implies $\tilde{P}_i W^{\cal S} =\tilde{P}_j W^{\cal S}$,
and the size $N$ of the ID code is bounded by the number $|{\cal X}|^M$ of $M$-types eventually.\footnote{More precisely,
we have used the contraposition of this claim.}
On the other hand, in the latter case, we cannot conclude that $\tilde{P}_1,\ldots, \tilde{P}_N$ are all distinct 
since $\tilde{P}_i=\tilde{P}_j$ does not necessarily imply $\tilde{P}_i W^{{\cal T}_{P_i}} =\tilde{P}_j W^{{\cal T}_{P_j}}$;
thus, the size $N$ of the ID code cannot be bounded by the number of $M$-types. Instead, it was attempted in \cite[Lemma 2]{steinberg:98}
to bound $N$ by the number of some alternative measures induced by $M$-types, which has a flaw \cite[Remark 2]{hayashi:06b}. 
\end{remark}

\begin{corollary} \label{corollary:single-shot-converse-Ds-expression}
For $0 \le \ep,\delta < 1$ with $\ep+\delta < 1$ and arbitrary $0 < \eta < 1 - \ep - \delta$, we have
\begin{align} 
\log\log N^\star(\ep,\delta|W) \le \inf_{Q \in {\cal P}({\cal Y})} \sup_{P \in {\cal P}({\cal X})}
D_{\mathtt{s}}^{\ep+\delta+\eta}(P\times W \| P\times Q)
+ \log \log |{\cal X}| + 2 \log(1/\eta) + 2. 
\label{eq:Ds-converse}
\end{align}
\end{corollary}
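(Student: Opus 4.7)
The plan is to apply the contrapositive of Theorem \ref{theorem:connection} with a single, well-chosen triple $(Q,\gamma,M)$ and then take limits. First I fix an arbitrary $Q\in\cP(\cY)$ and set
\begin{align*}
\gamma^\star := \sup_{P\in\cP(\cX)} D_{\mathtt{s}}^{\ep+\delta+\eta}(P\times W\,\|\,P\times Q).
\end{align*}
By the definition of $D_{\mathtt{s}}^{\alpha}$ as a supremum of thresholds, for every $\gamma>\gamma^\star$ and every $P\in\cP(\cX)$,
\begin{align*}
P\times W(\cS(Q,\gamma)) = \Pr_{(X,Y)\sim P\times W}\!\bigg[\log\frac{W(Y|X)}{Q(Y)}\le\gamma\bigg] > \ep+\delta+\eta.
\end{align*}
Since $P\times W(\cS(Q,\gamma))$ is linear, hence continuous, in $P$ on the compact set $\cP(\cX)$, the infimum is attained, so in fact $\inf_{P}P\times W(\cS(Q,\gamma))>\ep+\delta+\eta$.

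Next I pick $M$ to be the smallest integer with $\sqrt{e^\gamma/M}\le\eta$, namely $M=\lceil e^\gamma/\eta^2\rceil$. With this choice
\begin{align*}
\inf_{P}P\times W(\cS(Q,\gamma)) - \sqrt{\frac{e^\gamma}{M}} > (\ep+\delta+\eta)-\eta = \ep+\delta,
\end{align*}
so the contrapositive of Theorem \ref{theorem:connection} rules out every $(N,\ep,\delta)$-ID code with $N>|\cX|^M$. Hence $N^\star(\ep,\delta|W)\le|\cX|^M$ and therefore
\begin{align*}
\log\log N^\star(\ep,\delta|W) \le \log M + \log\log|\cX| \le \gamma + 2\log(1/\eta) + 2 + \log\log|\cX|,
\end{align*}
where the additive $2$ absorbs the $+1$ from the ceiling in $M$ together with the slack $\log(1+\eta^2/e^\gamma)\le\log 2$. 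Letting $\gamma\downarrow\gamma^\star$ and then taking the infimum over $Q\in\cP(\cY)$ delivers \eqref{eq:Ds-converse}.

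The main subtlety to watch is the strict-versus-non-strict passage from $D_{\mathtt{s}}^{\alpha}$ to a strict lower bound on $\Pr[\,\cdot\le\gamma]$: the defining supremum guarantees strict exceedance of $\alpha$ only for $\gamma$ strictly above $\gamma^\star$, which is why the bound must first be derived for every $\gamma>\gamma^\star$ and then obtained in the limit $\gamma\downarrow\gamma^\star$. This, together with tracking the ceiling in the choice of $M$ to land on the explicit constant $+2$, is routine bookkeeping; all information-theoretic content sits in the single invocation of Theorem \ref{theorem:connection}.
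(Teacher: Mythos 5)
Your proof is correct and essentially coincides with the paper's: both are contrapositive applications of Theorem \ref{theorem:connection} with $M\approx e^\gamma/\eta^2$, both use attainability of $\inf_{P}P\times W(\cS)$ on the compact set $\cP(\cX)$, and both land on the same constants. The only difference is cosmetic --- the paper starts from a given $(N,\ep,\delta)$-ID code and reads off $M$ and $\gamma$ in terms of $N$, while you fix $\gamma>\gamma^\star$ and bound $N^\star$ directly, then let $\gamma\downarrow\gamma^\star$. One point you should make explicit: your estimate $\log\lceil e^\gamma/\eta^2\rceil\le\gamma+2\log(1/\eta)+2$ silently relies on $e^\gamma/\eta^2\ge 1$, which is what makes the slack term $\log(1+\eta^2/e^\gamma)\le\log 2$; this does hold, because Lemma \ref{lemma:connection-Ds-beta} together with $\beta_{\ep+\delta}(P\times W,P\times Q)\le 1$ gives $D_{\mathtt{s}}^{\ep+\delta+\eta}(P\times W\|P\times Q)\ge-\log(1/\eta)$ for every $P$, hence $\gamma>\gamma^\star\ge-\log(1/\eta)$ and $e^\gamma/\eta^2>1/\eta>1$ (the paper dispatches the corresponding corner case $N\le|\cX|$ by the same observation in a footnote).
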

\begin{proof}
For arbitrary $(N,\ep,\delta)$-ID code,
by setting\footnote{Since $D_{\mathtt{s}}^{\ep+\delta+\eta}(P\times W \| P\times Q) + \log(1/\eta) \ge -\log \beta_{\ep+\delta}(P\times W, P\times Q)\ge 0$
by Lemma \ref{lemma:connection-Ds-beta}, \eqref{eq:Ds-converse} trivially holds if $N \le |{\cal X}|$.
Thus, we only consider the case with $N > |{\cal X}|$, which implies $M \ge 1$.}  
\begin{align*}
M = \bigg\lfloor \frac{\log (N-1)}{\log|{\cal X}|} \bigg\rfloor
\end{align*}
so that $N>|\cX|^M$
and 
\begin{align} \label{eq:proof-Ds-converse-1}
\gamma = 2 \log \eta + \log\log N - \log\log|{\cal X}| -2
\end{align}
so that $\sqrt{e^\gamma/M} \le \eta$, Theorem \ref{theorem:connection} implies
\begin{align} \label{eq:proof-Ds-converse-2}
\inf_{P\in {\cal P}({\cal X})} P\times W({\cal S}) \le \ep + \delta + \eta,
\end{align}
where $\cS = {\cal S}(Q,\gamma)$ is defined as in \eqref{eq:partial-set}
for arbitrarily fixed $Q \in \cP(\cY)$. In fact, since the lefthand side of \eqref{eq:proof-Ds-converse-2}
is linear with respect to $P$ and $\cP(\cX)$ is a compact set, the infimum in \eqref{eq:proof-Ds-converse-2} can be attained for some $P\in \cP(\cX)$. 
This means that $D_{\mathtt{s}}^{\ep+\delta+\eta}(P\times W\|P\times Q) \ge \gamma$ for some $P \in \cP(\cX)$, i.e.,
\begin{align*}
\log \log N \le \sup_{P\in\cP(\cX)} D_{\mathtt{s}}^{\ep+\delta+\eta}(P\times W\|P\times Q) + \log\log|\cX| + 2\log(1/\eta)+2.
\end{align*}
Since this bound holds for arbitrary $(N,\ep,\delta)$-ID code and $Q \in \cP(\cY)$, we have the claim of the corollary.
\end{proof}

From Lemma \ref{lemma:connection-Ds-beta} and Lemma \ref{lemma:min-max-of-beta}, Corollary \ref{corollary:single-shot-converse-Ds-expression} implies the following corollary.
\begin{corollary} \label{corollary:single-shot-converse-beta-expression}
For $0 \le \ep,\delta < 1$ with $\ep+\delta < 1$ and arbitrary $0 < \eta < 1 - \ep - \delta$, we have
\begin{align}
\log\log N^\star(\ep,\delta|W) &\le \min_{Q \in {\cal P}({\cal Y})} \max_{P \in {\cal P}({\cal X})} -\log \beta_{\ep+\delta+\eta}(P\times W, P\times Q) + \log \log |{\cal X}| + 2 \log(1/\eta) + 2  
\label{eq:min-max-converse}\\
&= \max_{P \in {\cal P}({\cal X})} \min_{Q \in {\cal P}({\cal Y})}  -\log \beta_{\ep+\delta+\eta}(P\times W, P\times Q) + \log \log |{\cal X}| + 2 \log(1/\eta) + 2. \label{eq:max-min-converses}
\end{align}
\end{corollary}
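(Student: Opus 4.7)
The plan is to deduce this corollary directly from Corollary \ref{corollary:single-shot-converse-Ds-expression} by converting the $\ep$-information spectrum divergence bound into a hypothesis testing bound and then invoking the saddle-point identity. Since Lemma \ref{lemma:connection-Ds-beta} provides the pointwise relation $D_{\mathtt{s}}^\ep(P_Z\|Q_Z) \le -\log \beta_\ep(P_Z,Q_Z)$, and Lemma \ref{lemma:min-max-of-beta} provides the min-max interchange together with attainability of both extrema on the compact simplices $\cP(\cX)$ and $\cP(\cY)$, both intermediate steps are essentially pre-packaged.

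First, I would fix $P \in \cP(\cX)$ and $Q \in \cP(\cY)$ and apply the left inequality of Lemma \ref{lemma:connection-Ds-beta} with the choice $P_Z = P\times W$ and $Q_Z = P\times Q$ at level $\ep+\delta+\eta$, yielding
\begin{align*}
D_{\mathtt{s}}^{\ep+\delta+\eta}(P\times W \| P\times Q) \le -\log \beta_{\ep+\delta+\eta}(P\times W, P\times Q).
\end{align*}
Taking the supremum over $P \in \cP(\cX)$ and then the infimum over $Q \in \cP(\cY)$ preserves this inequality, and combining with Corollary \ref{corollary:single-shot-converse-Ds-expression} gives
\begin{align*}
\log\log N^\star(\ep,\delta|W) \le \inf_{Q} \sup_{P} \bigl[-\log \beta_{\ep+\delta+\eta}(P\times W, P\times Q)\bigr] + \log\log|\cX| + 2\log(1/\eta) + 2.
\end{align*}

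Next, I would upgrade the $\inf$ and $\sup$ to $\min$ and $\max$ using compactness of the finite-alphabet simplices and the attainability asserted in Lemma \ref{lemma:min-max-of-beta}; this gives exactly \eqref{eq:min-max-converse}. The equality \eqref{eq:max-min-converses} is then an immediate rewriting via the saddle-point identity of Lemma \ref{lemma:min-max-of-beta} applied to the function $(P,Q)\mapsto \beta_{\ep+\delta+\eta}(P\times W, P\times Q)$, which is convex in $P$ and concave in $Q$ on compact domains.

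There is no genuine obstacle in this deduction; the only point requiring a small care is making sure the version of the $\ep$-information spectrum divergence bound we start from and the version of the min-max identity we cite are applied at matching error levels (namely $\ep+\delta+\eta$), so the additive constants on the two sides coincide with those already present in \eqref{eq:Ds-converse}. Since the bound in Corollary \ref{corollary:single-shot-converse-Ds-expression} already uses $\ep+\delta+\eta$, no extra $\log(1/\zeta)$ penalty from the right inequality of Lemma \ref{lemma:connection-Ds-beta} is needed, and the constants are preserved verbatim.
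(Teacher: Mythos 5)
Your proof is correct and follows exactly the route the paper intends: the paper's own proof consists of the single remark that Corollary \ref{corollary:single-shot-converse-beta-expression} follows from Corollary \ref{corollary:single-shot-converse-Ds-expression} together with Lemma \ref{lemma:connection-Ds-beta} and Lemma \ref{lemma:min-max-of-beta}, which is precisely the deduction you spell out. Your observation that only the left inequality of Lemma \ref{lemma:connection-Ds-beta} is needed (so the constants carry over unchanged) is the right point of care.
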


Up to some residual terms,
the upper bounds on the doubly exponential rate of the optimal ID code in Corollary \ref{corollary:single-shot-converse-Ds-expression} 
and Corollary \ref{corollary:single-shot-converse-beta-expression} have the same form as the upper bounds
on the rate of the optimal transmission code reported in the literature \cite{polyanskiy:10}. 
In the next section, we will discuss asymptotic behaviors of those bounds.

\section{Capacity for General Channels}

In this section, we derive the identification capacity of general channels.
Let $\bm{W} = \{ W^n \}_{n=1}^\infty$ be a sequence of general channels from 
$\cX^n$ to $\cY^n$, where $\cX$ and $\cY$ are finite alphabets; the channel $\bm{W}$ may not be stationary nor ergodic. 
For each integer $n$, an $(N_n,\ep_n,\delta_n)$-ID code for channel $W^n$
is defined exactly in the same manner as in Section \ref{section:problem-formulation}.
We are interested in characterizing the doubly exponential optimal growth rate of the message size $N_n$.
\begin{definition}
For given $0 \le \ep, \delta < 1$, a rate $R$ is said to be $(\ep,\delta)$-achievable ID rate for general channel $\bm{W}$ if
there exists a sequence of $(N_n, \ep_n,\delta_n)$-ID codes satisfying
\begin{align}
\limsup_{n\to\infty} \ep_n &\le \ep, \label{eq:asymptotic-error-1} \\
\limsup_{n\to\infty} \delta_n &\le \delta, \label{eq:asymptotic-error-2}
\end{align}
and 
\begin{align}
\liminf_{n\to\infty} \frac{1}{n} \log \log N_n \ge R. \label{eq:asymptotic-rate}
\end{align}
Then, the supremum of $(\ep,\delta)$-achievable ID rates for $\bm{W}$ is 
termed the $(\ep,\delta)$-ID capacity, and is denoted by $C_{\mathtt{ID}}(\ep,\delta|\bm{W})$.
Particularly, for $(\ep,\delta)=(0,0)$, it is termed the ID capacity and denoted by $C_{\mathtt{ID}}(\bm{W})$.
\end{definition}

For a sequence $\bm{X} = \{ X^n \}_{n=1}^\infty$ of input processes, denote by $\bm{Y} = \{ Y^n\}_{n=1}^\infty$
the corresponding output processes via $\bm{W}=\{W^n\}_{n=1}^\infty$, i.e., $P_{Y^n} = P_{X^n} W^n$ for each $n$. 
Then, for $0 \le \varepsilon < 1$, let
\begin{align} 
\underline{I}^\varepsilon(\bm{X} \wedge \bm{Y}) := \sup\bigg\{ a : \limsup_{n\to\infty} \Pr\bigg( \frac{1}{n} \log \frac{W^n(Y^n|X^n)}{P_{Y^n}(Y^n)} \le a \bigg) \le \varepsilon \bigg\}.
   \label{eq:definition-epsilon-spectral-inf-mutual-information}
\end{align}
be the $\varepsilon$-spectral inf-mutual information rate. Particularly, when $\varepsilon = 0$, we just denote $\underline{I}(\bm{X} \wedge \bm{Y})$.

In \cite{hayashi:06b}, the following lower bound on the $(\ep,\delta)$-ID capacity was derived. 
\begin{proposition} \label{proposition:lower-bound-ID-capacity}
For $0 \le \ep, \delta < 1$ and a sequence $\bm{W}=\{ W^n \}_{n=1}^\infty$ of general channels, we have
\begin{align} \label{eq:lower-bound-general-capacity}
C_{\mathtt{ID}}(\ep,\delta | \bm{W}) \ge \sup_{\bm{X}} \underline{I}^\ep(\bm{X} \wedge \bm{Y}),
\end{align}
where the supremum is taken over all sequences of input processes $\bm{X}$.\footnote{Note that the right-side of \eqref{eq:lower-bound-general-capacity} does not
depend on $\delta$. Before \cite{hayashi:06b}, it had been known that $C_{\mathtt{ID}}(\ep,\ep | \bm{W})$ can be
lower bounded by the right-side of \eqref{eq:lower-bound-general-capacity} \cite{han:book}. } 
\end{proposition}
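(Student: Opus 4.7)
The plan is to adapt the random sub-codebook construction of Ahlswede-Dueck and Han-Verd\'u to the general-channel setting via the information-spectrum typical set.

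Fix any input-process sequence $\bm X$ and any rate $R < \underline{I}^\ep(\bm X \wedge \bm Y)$; pick $\xi > 0$ small enough that $R + 3\xi < \underline{I}^\ep(\bm X \wedge \bm Y)$. Define
\[
  \cT_n := \bigg\{(x^n,y^n) \in \cX^n \times \cY^n :\, \tfrac{1}{n}\log \tfrac{W^n(y^n|x^n)}{P_{Y^n}(y^n)} > R + 3\xi\bigg\},
\]
so that $\limsup_n P_{X^n Y^n}(\cT_n^c) \le \ep$ by the definition of $\underline{I}^\ep$. For every $x^n$, the defining inequality of $\cT_n$ directly yields the single-slice estimate $P_{Y^n}(\cT_n(x^n)) < e^{-n(R+3\xi)}$, where $\cT_n(x^n) := \{y^n : (x^n,y^n) \in \cT_n\}$, because each $y^n \in \cT_n(x^n)$ satisfies $P_{Y^n}(y^n) < e^{-n(R+3\xi)} W^n(y^n|x^n)$.

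Set $M_n := \lfloor e^{n(R+2\xi)}\rfloor$ and $N_n := \lceil e^{e^{nR}}\rceil$. For each message $i \in \{1,\ldots,N_n\}$, independently draw a sub-codebook $\cC_i = \{X_1^{(i)},\ldots,X_{M_n}^{(i)}\}$ i.i.d.\ according to $P_{X^n}$; take the stochastic encoder $P_i$ to be uniform on $\cC_i$ and the decision region $\cD_i := \bigcup_{k} \cT_n(X_k^{(i)})$. The single-slice estimate immediately gives the deterministic bound $P_{Y^n}(\cD_j) \le M_n\, e^{-n(R+3\xi)} \le e^{-n\xi}$ for every realization of $\cC_j$. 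For the type II error, $P_i W^n(\cD_j) = \tfrac{1}{M_n}\sum_k W^n(\cD_j|X_k^{(i)})$ is the mean of $M_n$ i.i.d.\ $[0,1]$-valued random variables whose conditional expectation given $\cC_j$ is $P_{Y^n}(\cD_j) \le e^{-n\xi}$, so Hoeffding's inequality gives $\Pr\{P_i W^n(\cD_j) > 2 e^{-n\xi/2}\} \le \exp(-2\, e^{n(R+\xi)})$; a union bound over the $N_n^2 \le e^{2 e^{nR}}$ pairs forces $\max_{i \ne j} P_i W^n(\cD_j) \to 0$ with probability tending to one. For the type I error, the inclusion $\cT_n(X_k^{(i)}) \subseteq \cD_i$ gives $W^n(\cD_i^c | X_k^{(i)}) \le W^n(\cT_n(X_k^{(i)})^c | X_k^{(i)})$, so $P_i W^n(\cD_i^c)$ is dominated by an empirical mean with expectation $P_{X^n Y^n}(\cT_n^c)$ whose $\limsup$ is at most $\ep$; Hoeffding together with a union bound over $N_n$ messages gives $\limsup_n \max_i P_i W^n(\cD_i^c) \le \ep$ almost surely.

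The main quantitative obstacle is the coupled choice of $M_n$: it must be large enough ($M_n \gg e^{nR}$) so that the Hoeffding exponent $M_n e^{-n\xi} = e^{n(R+\xi)}$ dominates the $\log N_n^2 = 2 e^{nR}$ cost of the union bound over doubly exponentially many pairs, yet small enough ($M_n \le e^{n(R+3\xi)}$) so that the deterministic bound $P_{Y^n}(\cD_j) \le e^{-n\xi}$ stays exponentially small; the slack $\xi$ between $R$ and $\underline{I}^\ep(\bm X \wedge \bm Y)$ is precisely what makes this window non-empty. Combining the two high-probability events yields a realization $\cC_1,\ldots,\cC_{N_n}$ giving an $(N_n,\ep_n,\delta_n)$-ID code with $\tfrac{1}{n}\log\log N_n \to R$, $\limsup \ep_n \le \ep$, and $\delta_n \to 0$; letting $R \uparrow \underline{I}^\ep(\bm X \wedge \bm Y)$ and taking the supremum over $\bm X$ completes the proof.
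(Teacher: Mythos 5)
Your proof is correct, and it recovers the proposition by the standard random sub-codebook construction of Ahlswede--Dueck and Han--Verd\'u, adapted to the information-spectrum setting exactly as in \cite{hayashi:06b}. The paper itself does not prove Proposition~\ref{proposition:lower-bound-ID-capacity}; it simply cites \cite{hayashi:06b}, so there is no in-paper proof to compare against line by line. The closest in-paper analogue is Lemma~\ref{lemma:single-shot-achievability}, which is the single-shot achievability bound from \cite{hayashi:06b} used in Section~\ref{section:second-order}; that lemma packages essentially the same random sub-codebook argument into a one-shot statement with explicit slack parameters $(a,a',b,b',\tau,\kappa)$, from which the proposition follows by a parameter choice along the lines of what you did. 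Your version is more elementary and self-contained: you work directly with the information-spectrum set $\cT_n$, the single-slice bound $P_{Y^n}(\cT_n(x^n))< e^{-n(R+3\xi)}$, and two applications of Hoeffding with a doubly exponential union bound.

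Two small points of presentation worth fixing. First, the phrase ``almost surely'' in the type-I analysis is not what you want; what the Hoeffding-plus-union-bound calculation gives is that, for each $n$, the event
\begin{align*}
E_n := \Big\{ \max_i P_i W^n(\cD_i^c) \le P_{X^nY^n}(\cT_n^c) + t_n \Big\} \cap \Big\{ \max_{i\neq j} P_iW^n(\cD_j) \le 2 e^{-n\xi/2} \Big\}
\end{align*}
has probability tending to one, so for all large $n$ one can pick a deterministic codebook realizing $E_n$; the convergence $\limsup_n \ep_n \le \ep$ and $\delta_n \to 0$ then holds for this chosen sequence, not ``almost surely'' over a single product space. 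Second, you should state explicitly the choice of slack $t_n \to 0$ in the type-I Hoeffding bound (e.g.~$t_n = e^{-n\xi/2}$, or $n^{-1/4}$), and check that $M_n t_n^2$ still dominates $\log N_n = O(e^{nR})$, which it does since $M_n t_n^2 \approx e^{n(R+\xi)}$. Neither of these affects the substance; the window for $M_n$ between $e^{nR}$ and $e^{n(R+3\xi)}$ is exactly where the construction lives, and you have identified that correctly.
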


On the other hand, from Corollary \ref{corollary:single-shot-converse-beta-expression}, we can derive the following upper bound on the $(\ep,\delta)$-ID capacity.
\begin{theorem} \label{theorem:upper-bound-ID-capacity}
For $0 \le \ep, \delta < 1$ with $\ep+\delta<1$ and a sequence $\bm{W}=\{ W^n \}_{n=1}^\infty$ of general channels, we have
\begin{align*}
C_{\mathtt{ID}}(\ep,\delta | \bm{W}) \le \sup_{\bm{X}} \underline{I}^{\ep+\delta}(\bm{X} \wedge \bm{Y}).
\end{align*}
\end{theorem}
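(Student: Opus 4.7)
\textbf{Plan}: The strategy is to bootstrap from the single-shot minimax bound in Corollary~\ref{corollary:single-shot-converse-beta-expression} using a standard information-spectrum limiting argument, with a carefully chosen auxiliary output distribution.

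Fix an arbitrary $(\ep,\delta)$-achievable rate $R$, witnessed by a sequence of $(N_n,\ep_n,\delta_n)$-ID codes satisfying \eqref{eq:asymptotic-error-1}--\eqref{eq:asymptotic-rate}, and fix $\eta,\zeta>0$ small enough that $\ep+\delta+\eta+\zeta<1$. For all sufficiently large $n$, Corollary~\ref{corollary:single-shot-converse-beta-expression} applied to $W^n$ on the alphabets $\cX^n,\cY^n$ gives
\begin{align*}
\log\log N_n \le \max_{P^n\in\cP(\cX^n)}\min_{Q^n\in\cP(\cY^n)} -\log\beta_{\ep_n+\delta_n+\eta}(P^n\times W^n,\, P^n\times Q^n) + O(\log n),
\end{align*}
where the $\log\log|\cX^n|=\log(n\log|\cX|)$ residual is $o(n)$ and thus negligible on the $\tfrac{1}{n}$-scale.

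For each such $n$, let $P^n_\star$ attain the outer maximum (available by compactness of the finite-alphabet simplex and continuity of $\beta$), and upper-bound the inner minimum by its value at the specific choice $Q^n=P^n_\star W^n$. Let $\bm{X}_\star=\{X^n_\star\}$ denote the input process with $P_{X^n_\star}=P^n_\star$ and $\bm{Y}_\star$ its output through $\bm{W}$. Converting $\beta$ to $D_{\mathtt{s}}$ via Lemma~\ref{lemma:connection-Ds-beta} and exploiting the identity $\log\tfrac{P_{X^n_\star Y^n_\star}(x^n,y^n)}{P_{X^n_\star}(x^n)P_{Y^n_\star}(y^n)}=\log\tfrac{W^n(y^n|x^n)}{P_{Y^n_\star}(y^n)}$, division by $n$ yields
\begin{align*}
\frac{1}{n}\log\log N_n \le \sup\left\{a:\Pr\left(\frac{1}{n}\log\frac{W^n(Y^n_\star|X^n_\star)}{P_{Y^n_\star}(Y^n_\star)}\le a\right)\le \ep_n+\delta_n+\eta+\zeta\right\} + o(1).
\end{align*}

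Taking $\liminf_{n\to\infty}$ and using $\limsup(\ep_n+\delta_n)\le\ep+\delta$, the threshold $\ep_n+\delta_n+\eta+\zeta$ is eventually at most $\ep+\delta+\eta+\zeta+\xi$ for any $\xi>0$; a routine manipulation of these generalized-inverse-type suprema against the definition \eqref{eq:definition-epsilon-spectral-inf-mutual-information} then delivers
\begin{align*}
R \le \underline{I}^{\ep+\delta+\eta+\zeta+\xi}(\bm{X}_\star\wedge\bm{Y}_\star) \le \sup_{\bm{X}}\underline{I}^{\ep+\delta+\eta+\zeta+\xi}(\bm{X}\wedge\bm{Y}),
\end{align*}
and one finally lets $\eta,\zeta,\xi\downarrow 0$. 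The main obstacle is precisely this last limit: $\alpha\mapsto\sup_{\bm{X}}\underline{I}^\alpha(\bm{X}\wedge\bm{Y})$ is monotone but may fail to be right-continuous at $\alpha=\ep+\delta$, because the supremum of right-continuous functions need not be right-continuous. This is the usual sticking point in information-spectrum converses; it can be absorbed by noting that the maximizers $P^n_\star$ (which may depend on $\eta,\zeta,\xi$) are always captured by the outer $\sup_{\bm{X}}$, and then applying a diagonal selection to reduce to $\alpha=\ep+\delta$.
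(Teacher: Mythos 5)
You have the right single-shot starting point and the right conversion to the $D_{\mathtt{s}}$ language, and you have correctly identified the sticking point: the map $\alpha\mapsto\sup_{\bm{X}}\underline{I}^\alpha(\bm{X}\wedge\bm{Y})$ need not be right-continuous at $\alpha=\ep+\delta$, so ``let $\eta,\zeta,\xi\downarrow 0$'' does not finish the proof on its own. However, the proposed repair---``apply a diagonal selection''---is not a fix: the maximizers $\hat{P}^n$ genuinely depend on $\eta$ through the level $\ep_n+\delta_n+\eta$ in $\beta$, so you obtain a one-parameter family of processes $\bm{X}_\star^{(\eta)}$, and there is no obvious way to diagonalize these into a single process whose spectral inf-mutual information rate \emph{at the exact level} $\ep+\delta$ controls $R$. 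As stated, there is a genuine gap.

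The paper closes this gap not by a limiting argument in $\eta$ but by making the slack vanish \emph{with} $n$: it applies Corollary~\ref{corollary:single-shot-converse-beta-expression} with $\eta_n=1/n$, so that the $\beta$-level $\ep_n+\delta_n+\eta_n$ has $\limsup$ at most $\ep+\delta$, and the maximizers $\hat{P}^n$ define a single fixed process $\hat{\bm{X}}$ independent of any later parameters. Then, with $\hat{\bm{X}}$ fixed, it sets $\xi=\underline{I}^{\ep+\delta}(\hat{\bm{X}}\wedge\hat{\bm{Y}})+\tau$ for arbitrary $\tau>0$; by the supremum-definition in \eqref{eq:definition-epsilon-spectral-inf-mutual-information}, the event $\{\frac{1}{n}\log\frac{W^n}{P_{\hat Y^n}}\le\xi\}$ has probability exceeding $\ep+\delta$ by a strict margin $\nu>0$ for infinitely many $n$, and that margin absorbs $\ep_n+\delta_n+2\eta_n\le\ep+\delta+\nu$ for $n$ large. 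This produces $R\le\underline{I}^{\ep+\delta}(\hat{\bm{X}}\wedge\hat{\bm{Y}})+\tau$ directly at the correct threshold, and $\tau\downarrow0$ is now harmless because $\hat{\bm{X}}$ does not depend on $\tau$. In short: let $\eta$ vanish inside the single-shot bound rather than outside, so you never need right-continuity in $\alpha$.
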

\begin{proof}
Suppose that $R$ is $(\ep,\delta)$-achievable ID rate, i.e., there exists a sequence of $(N_n,\ep_n,\delta_n)$-ID codes
satisfying \eqref{eq:asymptotic-error-1}, \eqref{eq:asymptotic-error-2}, and \eqref{eq:asymptotic-rate}.
By Corollary \ref{corollary:single-shot-converse-beta-expression}, we have
\begin{align}
\frac{1}{n} \log \log N_n \le \max_{P_{X^n}}\min_{Q_{Y^n}} -\frac{1}{n} \log \beta_{\ep_n+\delta_n+\eta_n}(P_{X^n}\times W^n, P_{X^n}\times Q_{Y^n}) + \Delta_n
\label{eq:proof-general-formula-1}
\end{align}
for $\eta_n=1/n$,\footnote{Since $\ep+\delta<1$, we have $\eta_n < 1-\ep_n-\delta_n$ for sufficiently large $n$.} where 
\begin{align*}
\Delta_n = \frac{1}{n}\big(\log n + \log\log|{\cal X}| + 2\log(1/\eta_n)+2 \big).
\end{align*}
Let $\hat{\bm{X}}=\{ \hat{X}^n\}$ be a sequence of input processes that attain the maximum in \eqref{eq:proof-general-formula-1} for each $n$,
and let $\hat{\bm{Y}}=\{ \hat{Y}^n\}$ be the corresponding output process. Then, we have
\begin{align}
\frac{1}{n} \log \log N_n \le -\frac{1}{n} \log \beta_{\ep_n+\delta_n+\eta_n}(P_{\hat{X}^n}\times W^n, P_{\hat{X}^n}\times P_{\hat{Y}^n}) + \Delta_n.
\label{eq:proof-general-formula-2}
\end{align}
Furthermore, by applying the righthand inequality of Lemma \ref{lemma:connection-Ds-beta}, we have
\begin{align}
\frac{1}{n} \log \log N_n \le -\frac{1}{n} \log D_{\mathtt{s}}^{\ep_n+\delta_n+2\eta_n}(P_{\hat{X}^n}\times W^n \| P_{\hat{X}^n}\times P_{\hat{Y}^n}) + \Delta_n + \frac{1}{n}\log(1/\eta_n)
\label{eq:proof-general-formula-2-2}
\end{align}
for sufficiently large $n$.

For arbitrary $\tau>0$, let $\xi= \underline{I}^{\ep+\delta}(\hat{\bm{X}} \wedge \hat{\bm{Y}}) + \tau$. Then, from the definition in \eqref{eq:definition-epsilon-spectral-inf-mutual-information},
there exists $\nu>0$ such that 
\begin{align*}
\Pr\bigg( \frac{1}{n} \log \frac{W^n(\hat{Y}^n|\hat{X}^n)}{P_{\hat{Y}^n}(\hat{Y}^n)} \le \xi \bigg) \ge \ep+\delta+ \nu
\end{align*}
for infinitely many $n$. Then, for those $n$'s, since $\limsup_{n\to\infty}\ep_n+\delta_n+2\eta_n \le \ep+\delta$, 
we have
\begin{align}
D_{\mathtt{s}}^{\ep_n+\delta_n+2\eta_n}(P_{\hat{X}^n}\times W^n \| P_{\hat{X}^n}\times P_{\hat{Y}^n})
\le \xi 
\label{eq:proof-general-formula-3}
\end{align}
provided that $n$ is sufficiently large. Thus, by combining \eqref{eq:proof-general-formula-2}, \eqref{eq:proof-general-formula-2-2} and \eqref{eq:proof-general-formula-3}, we have
\begin{align*}
R &\le \liminf_{n\to\infty} \frac{1}{n} \log\log N_n \\
&\le \xi \\
&\le \sup_{\bm{X}} \underline{I}^{\ep+\delta}(\bm{X} \wedge \bm{Y}) + \tau.
\end{align*} 
Since $\tau$ is arbitrary, any $(\ep,\delta)$-achievable ID rate $R$ must satisfy
$R \le \sup_{\bm{X}} \underline{I}^{\ep+\delta}(\bm{X} \wedge \bm{Y})$, which implies the claim of the theorem. 
\end{proof}

When the requirement of the type-II error probability is $\delta=0$, we can completely
characterize the ID capacity from Proposition \ref{proposition:lower-bound-ID-capacity} and Theorem \ref{theorem:upper-bound-ID-capacity} as follows. 
\begin{corollary}
For $0 \le \ep < 1$ and a sequence $\bm{W}=\{ W^n \}_{n=1}^\infty$ of general channels, we have
\begin{align*} 
C_{\mathtt{ID}}(\ep,0 | \bm{W}) = \sup_{\bm{X}} \underline{I}^\ep(\bm{X} \wedge \bm{Y}).
\end{align*}
Particularly, for $\ep=0$, we have
\begin{align} \label{eq:general-ID-capacity}
C_{\mathtt{ID}}(\bm{W}) = \sup_{\bm{X}} \underline{I}(\bm{X} \wedge \bm{Y}).
\end{align}
\end{corollary}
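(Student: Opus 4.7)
The plan is to sandwich $C_{\mathtt{ID}}(\ep,0|\bm{W})$ between the achievability bound of Proposition \ref{proposition:lower-bound-ID-capacity} and the converse bound of Theorem \ref{theorem:upper-bound-ID-capacity}, specialized to $\delta=0$. The key observation is that the achievability carries the index $\ep$ in $\underline{I}^\ep$, while the converse carries the index $\ep+\delta$; setting $\delta=0$ aligns the two indices, so the resulting sandwich closes with equality.

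First I would apply Proposition \ref{proposition:lower-bound-ID-capacity} with $\delta=0$ to obtain $C_{\mathtt{ID}}(\ep,0|\bm{W}) \ge \sup_{\bm{X}} \underline{I}^\ep(\bm{X} \wedge \bm{Y})$. Next I would verify that the hypothesis $\ep+\delta<1$ of Theorem \ref{theorem:upper-bound-ID-capacity} holds (it does, since $\delta=0$ and $\ep<1$) and conclude that $C_{\mathtt{ID}}(\ep,0|\bm{W}) \le \sup_{\bm{X}} \underline{I}^{\ep+0}(\bm{X} \wedge \bm{Y}) = \sup_{\bm{X}} \underline{I}^\ep(\bm{X} \wedge \bm{Y})$. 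Combining the two inequalities yields the first displayed equality. For the second equality I would specialize to $\ep=0$, invoke the convention stated immediately after \eqref{eq:definition-epsilon-spectral-inf-mutual-information} that $\underline{I}^0(\bm{X} \wedge \bm{Y}) = \underline{I}(\bm{X} \wedge \bm{Y})$, and use the definition $C_{\mathtt{ID}}(\bm{W}) = C_{\mathtt{ID}}(0,0|\bm{W})$ from the displayed Definition.

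Because all the nontrivial work has already been done in Proposition \ref{proposition:lower-bound-ID-capacity} and Theorem \ref{theorem:upper-bound-ID-capacity}, there is no genuine obstacle at this step; the corollary is essentially bookkeeping. The closest thing to a conceptual subtlety worth flagging is why one cannot obtain a matching formula for general $\delta>0$ by the same recipe: the converse produces $\underline{I}^{\ep+\delta}$ rather than $\underline{I}^\ep$, and this index gap is exactly what the specialization $\delta=0$ eliminates. Accordingly, I would keep the write-up extremely short, essentially a single display invoking Proposition \ref{proposition:lower-bound-ID-capacity} and Theorem \ref{theorem:upper-bound-ID-capacity} followed by the trivial $\ep=0$ reduction.
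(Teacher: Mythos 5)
Your proposal is correct and matches the paper's argument exactly: the corollary is stated immediately after Proposition \ref{proposition:lower-bound-ID-capacity} and Theorem \ref{theorem:upper-bound-ID-capacity} precisely as the observation that at $\delta=0$ the achievability index $\ep$ and the converse index $\ep+\delta$ coincide. The paper gives no separate proof beyond this bookkeeping, so there is nothing to compare.
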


Note that \eqref{eq:general-ID-capacity} coincides with the general formula of the transmission capacity \cite{verdu:94}.
Thus, the ID capacity and the transmission capacity coincide for general channels. 
Previously, the coincidence of the ID capacity and the transmission capacity was known 
only for channels satisfying the strong converse property \cite{han:book}; it should be emphasized
that \eqref{eq:general-ID-capacity} holds without the assumption of the strong converse property. 

\section{Second-Order Coding Rate} \label{section:second-order}

In this section, we consider the second-order coding rate of the identification via discrete memoryless channels (DMCs) $W^n$.
As we have mentioned at the end of Section \ref{section:problem-formulation}, the optimal code size $N^\star(\ep,\delta | W^n)$
behaves like 
\begin{align*}
\log\log N^\star(\ep,\delta | W^n) = nC(W) + o(n),
\end{align*}
where $C(W)$ is the transmission capacity of channel $W$. 
In this section, we are interested in characterizing $L_{\mathtt{ID}}(\ep,\delta|W)$ in the expansion
\begin{align*}
\log\log N^\star(\ep,\delta | W^n) = nC(W) + \sqrt{n} L_{\mathtt{ID}}(\ep,\delta|W) + o(\sqrt{n})
\end{align*}
for fixed $0 < \ep < 1$ and vanishing $\delta \to 0$.

\begin{definition}
For given $0 < \ep,\delta < 1$ and DMC $W$, the second-order ID rate $L$ is defined to be $(\ep,\delta)$-achievable if there exists
a sequence of $(N_n,\ep_n,\delta_n)$-ID codes for $W^n$ satisfying 
\begin{align}
\limsup_{n\to\infty} \ep_n &\le \ep, \label{eq:definition-second-order-1} \\
\limsup_{n\to\infty} \delta_n &\le \delta, \label{eq:definition-second-order-2}
\end{align}
and 
\begin{align}
\liminf_{n\to\infty} \frac{1}{\sqrt{n}}\big( \log \log N_n - n C(W)  \big) \ge L. \label{eq:definition-second-order-3}
\end{align}
Then, the supremum of $(\ep,\delta)$-achievable second-order ID rates is termed the
second-order $(\ep,\delta)$-ID capacity, and is denoted by $L_{\mathtt{ID}}(\ep,\delta|W)$. Particularly, 
\begin{align*}
L_{\mathtt{ID}}(\ep|W) := \lim_{\delta\to0} L_{\mathtt{ID}}(\ep,\delta|W)
\end{align*}
is termed the second-order $\ep$-ID capacity. 
\end{definition}

In order to characterize the second-order rate, we need to introduce certain information quantities. 
Let 
\begin{align*}
\Pi(W) := \big\{ P_X \in {\cal P}({\cal X}) : I(X \wedge Y) = C(W) \big\}
\end{align*}
be the set of all capacity achieving input distributions. 
Even though capacity achieving input distributions may not be unique in general,
it is known that the capacity achieving output distribution $P_Y^*$ is unique. 

For a given output distribution $Q_Y$, let 
\begin{align*}
V(W \|Q_Y | P_X) := \sum_x P_X(x) \sum_y W(y|x) \bigg( \log \frac{W(y|x)}{Q_Y(y)} - D(W(\cdot |x) \| Q_Y) \bigg)^2
\end{align*}
be the conditional variance of the log-likelihood ratio between $W(\cdot|x)$ and $Q_Y$,
where $D(\cdot\|\cdot)$ is the KL-divergence.
Then, we define the minimum and the maximum of conditional information variances as
\begin{align*}
V_{\min}(W) &:= \min_{P_X \in \Pi(W)} V(W \| P_Y^* |P_X), \\
V_{\max}(W) &= \max_{P_X \in \Pi(W)} V(W \| P_Y^*|P_X).
\end{align*}
Using these quantities, $\ep$-dispersion of channel $W$ is defined as 
\begin{align*}
V_\ep(W) := \left\{
\begin{array}{ll}
V_{\min}(W) & \mbox{if } \ep < \frac{1}{2} \\
V_{\max}(W) & \mbox{if } \ep \ge \frac{1}{2}
\end{array}
\right. .
\end{align*}

For a given input distribution $P_X$ and corresponding output distribution $P_Y = P_XW$, let
\begin{align*}
U(P_X,W) := \sum_{x,y} P_X(x) W(y|x) \bigg( \log \frac{W(y|x)}{P_Y(y)} - I(X \wedge Y) \bigg)^2
\end{align*}
be the unconditional information variance. Then, we define the minimum and the maximum of 
unconditional information variances as
\begin{align*}
U_{\min}(W) &:= \min_{P_X \in \Pi(W)} U(P_X,W), \\
U_{\max}(W) &:= \max_{P_X \in \Pi(W)} U(P_X,W). \\
\end{align*}
Even though the unconditional information variance $U(P_X,W)$ can be strictly larger than 
the conditional information variance $V(W \| P_XW|P_X)$ in general, for capacity achieving input distributions,
these quantities coincide. Thus, the quantity 
\begin{align*}
U_\ep(W) := 
\left\{
\begin{array}{ll}
U_{\min}(W) & \mbox{if } \ep < \frac{1}{2} \\
U_{\max}(W) & \mbox{if } \ep \ge \frac{1}{2}
\end{array}
\right. .
\end{align*}
coincides with the $\ep$-dispersion $V_\ep(W)$ defined above \cite{Tan:book}. 

Now, we are ready to present the characterization of the second-order $\ep$-ID capacity. 
\begin{theorem} \label{theorem:second-order-rate}
For given DMC $W$ and $0 < \ep < 1$, if $V_\ep(W)>0$, then the second-order
$\ep$-ID capacity is given by
\begin{align} \label{eq:second-order-characterization}
L_{\mathtt{ID}}(\ep|W) = \sqrt{V_\ep(W)} \Phi^{-1}(\ep), 
\end{align}
where $\Phi^{-1}(\cdot)$ is the inverse function of the cumulative distribution
function 
\begin{align*}
\Phi(a)=\int_\infty^a \frac{1}{\sqrt{2\pi}} e^{-\frac{t^2}{2}} dt
\end{align*}
of the Gaussian distribution.
\end{theorem}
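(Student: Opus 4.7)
For the achievability, the plan is to specialize Proposition \ref{proposition:lower-bound-ID-capacity} to i.i.d.\ input processes. Fix any $P_X\in\Pi(W)$ and take $X^n$ i.i.d.\ from $P_X$, so that $P_{Y^n}=(P_Y^*)^n$ and
\[
\log\frac{W^n(Y^n|X^n)}{P_{Y^n}(Y^n)} = \sum_{i=1}^n \log\frac{W(Y_i|X_i)}{P_Y^*(Y_i)}
\]
is an i.i.d.\ sum with mean $C(W)$ and variance $V(W\|P_Y^*|P_X)=U(P_X,W)$, the two information variances coinciding on $\Pi(W)$. The Berry--Esseen theorem then gives
\[
\underline{I}^{\ep_n}(\bm X\wedge\bm Y) \ge C(W) + \frac{\sqrt{V(W\|P_Y^*|P_X)}}{\sqrt n}\,\Phi^{-1}(\ep) + o(1/\sqrt n)
\]
for any $\ep_n\to\ep$, so Proposition \ref{proposition:lower-bound-ID-capacity} yields $L_{\mathtt{ID}}(\ep,\delta|W)\ge\sqrt{V(W\|P_Y^*|P_X)}\,\Phi^{-1}(\ep)$ for every $\delta>0$. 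Optimizing over $P_X\in\Pi(W)$---minimizing the variance when $\ep<1/2$ (where $\Phi^{-1}(\ep)<0$) and maximizing it when $\ep\ge 1/2$---produces the lower bound $\sqrt{V_\ep(W)}\,\Phi^{-1}(\ep)$.

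For the converse, the plan is to apply Corollary \ref{corollary:single-shot-converse-beta-expression} to $W^n$ with $\eta_n=1/n$, giving
\[
\log\log N^\star(\ep_n,\delta_n|W^n) \le \min_{Q_{Y^n}}\max_{P_{X^n}} -\log\beta_{\alpha_n}\bigl(P_{X^n}\times W^n,\,P_{X^n}\times Q_{Y^n}\bigr) + O(\log n)
\]
with $\alpha_n:=\ep_n+\delta_n+1/n\to\ep+\delta$. The idea is to import the choice of $Q_{Y^n}$ from the standard dispersion converse for DMC transmission---a suitable mixture of $(PW)^n$ over an $O(1/\sqrt n)$-discretization of $\Pi(W)$, as in Tomamichel--Tan \cite{TomTan:13} and the exposition in \cite{Tan:book}---and to evaluate the inner maximum via Berry--Esseen, obtaining the upper bound $nC(W)+\sqrt n\,\sqrt{V_{\ep+\delta}(W)}\Phi^{-1}(\ep+\delta)+o(\sqrt n)$. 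This gives $L_{\mathtt{ID}}(\ep,\delta|W)\le\sqrt{V_{\ep+\delta}(W)}\Phi^{-1}(\ep+\delta)$; letting $\delta\to 0$ and using continuity of $V_\ep(W)$ (locally constant away from $\ep=1/2$) and of $\Phi^{-1}$ then produces the matching upper bound $\sqrt{V_\ep(W)}\Phi^{-1}(\ep)$, under the assumption $V_\ep(W)>0$.

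The main obstacle is the converse analysis. The naive choice $Q_{Y^n}=(P_Y^*)^n$ combined with the symbol-wise relaxation \eqref{eq:symbol-wise-relaxation} is too loose: for a constant input $x^n=(x^*,\ldots,x^*)$ with $x^*\in S:=\bigcup_{P\in\Pi(W)}\mathtt{supp}(P)$ and $V(W(\cdot|x^*)\|P_Y^*)$ small (or zero, as in the Z-channel with $x^*=0$), the quantity $D_{\mathtt s}^{\alpha_n}(W^n(\cdot|x^n)\|(P_Y^*)^n)$ saturates at $nC(W)$ with no $\sqrt n$ correction. A more refined $Q_{Y^n}$---essentially a mixture over types near $\Pi(W)$---is needed to force the maximizing $P_{X^n}$ to concentrate on capacity-achieving input distributions and thereby pick up exactly the dispersion $V_\ep(W)$. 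Importing this transmission-dispersion machinery through the ID-specific minimax converse (Corollary \ref{corollary:single-shot-converse-beta-expression}) is the main technical content of the second-order converse.
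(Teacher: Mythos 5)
Your converse argument is essentially the paper's: apply the minimax single-shot converse with $\eta_n=1/\sqrt n$ (or $1/n$), invoke the symbol-wise relaxation \eqref{eq:symbol-wise-relaxation}, and import the sophisticated choice of $Q_{Y^n}$ from Tomamichel--Tan \cite{TomTan:13} (quoted as Lemma~\ref{lemma:converse-evaluation-D}), then let $\delta\to 0$. Your diagnosis of why the naive $Q_{Y^n}=(P_Y^*)^n$ fails (constant inputs with degenerate per-symbol variance saturate at $nC(W)$) is correct and is precisely why the paper cites \cite{TomTan:13} rather than using the crude product output distribution.

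The achievability part, however, has a genuine gap. Proposition~\ref{proposition:lower-bound-ID-capacity} is a \emph{first-order} statement: the quantity $\underline{I}^\varepsilon(\bm X\wedge\bm Y)$ in \eqref{eq:definition-epsilon-spectral-inf-mutual-information} is defined through a $\limsup$ over $n$ and is therefore a single number, not a sequence; for an i.i.d.\ input on a DMC it equals $C(W)$ for every $0\le\varepsilon<1$ by the law of large numbers. Writing $\underline{I}^{\ep_n}(\bm X\wedge\bm Y)\ge C(W)+\sqrt{V}\,\Phi^{-1}(\ep)/\sqrt n+o(1/\sqrt n)$ confuses this asymptotic quantity with a finite-$n$ information-spectrum divergence, and Proposition~\ref{proposition:lower-bound-ID-capacity}, which only controls $\liminf_n \frac1n\log\log N_n$, cannot produce a $\sqrt n$ correction term under any substitution. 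To obtain the second-order lower bound you must go one level down and use the \emph{single-shot} achievability bound underlying Proposition~\ref{proposition:lower-bound-ID-capacity}; this is exactly what the paper does, quoting Lemma~\ref{lemma:single-shot-achievability} from \cite{hayashi:06b}, choosing the parameters $a,a',b,b',\tau,\kappa,K,M$ as explicit functions of $n$ so that $\delta_n\to 0$ while $\log\log N_n\approx nR$, and then applying the central limit theorem to $\Pr\big(\tfrac1n\log\tfrac{W^n(Y^n|X^n)}{P_Y^n(Y^n)}\le R\big)$ with $R=C(W)+\sqrt{U_\ep(W)/n}\,\Phi^{-1}(\ep)$. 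Your Berry--Esseen/i.i.d.-input idea is the right one, but it must be routed through Lemma~\ref{lemma:single-shot-achievability}, not through the first-order Proposition~\ref{proposition:lower-bound-ID-capacity}.
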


Note that the characterization of the second-order $\ep$-ID capacity in \eqref{eq:second-order-characterization}
coincides with the second-order $\ep$-transmission capacity \cite{Strassen:62, hayashi:09, polyanskiy:10}.

\subsection{Proof of achievability}

The achievability part of Theorem \ref{theorem:second-order-rate} is a straightforward consequence of
the achievability bound derived in \cite{hayashi:06b}.

\begin{lemma}[\cite{hayashi:06b}] \label{lemma:single-shot-achievability}
For given channel $W$ and input distribution $P_X$, let $P_Y$ be the corresponding output distribution. 
Assume that real numbers $a,a^\prime, b, b^\prime, \tau, \kappa>0$ satisfy
\begin{align} \label{eq:constrants-parameter-1}
\kappa \log \bigg( \frac{1}{\tau} - 1\bigg) > \log 2 + 1,~ 0 < \tau < 1/3,~0 < \kappa < 1
\end{align}
and
\begin{align} \label{eq:constrants-parameter-2}
1 > \frac{1}{a} + \frac{1}{a^\prime},~ c:= 1 - \frac{1}{b} - \frac{1}{b^\prime} > 0.
\end{align}
Then, for any integer $M>0$ and for any real number $K>0$, there exists an $(N,\ep,\delta)$-ID code such that
\begin{align*}
\ep &\le a b \Pr\bigg( \log \frac{W(Y|X)}{P_Y(Y)} \le \log K \bigg), \\
\delta &\le \kappa + a^\prime b^\prime \frac{1}{K} \bigg\lceil \frac{M}{c} \bigg\rceil, \\
N &= \bigg\lfloor \frac{e^{\tau M}}{M e} \bigg\rfloor
\end{align*}
provided that\footnote{In \cite[Eq.~(3)]{hayashi:06b}, there is a typo that $\alpha$ ($a$ in our notation) is missing.}
\begin{align*} 
ab \Pr\bigg( \log \frac{W(Y|X)}{P_Y(Y)} \le \log K \bigg) + a^\prime b^\prime \frac{1}{K} \bigg\lceil \frac{M}{c} \bigg\rceil < 1,
\end{align*}
where $(X,Y) \sim P_X \times W$.
\end{lemma}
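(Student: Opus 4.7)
The plan is to follow the Ahlswede-Dueck template of building an identification code on top of a transmission-like base code, combined with information-spectrum decoding. First, I would generate a base codebook $\{X_1,\ldots,X_M\}$ by drawing $M$ symbols i.i.d.\ from $P_X$, equipped with a threshold decoder $\phi:\cY\to\{0,1,\ldots,M\}$ that outputs $k$ when $\log W(y|X_k)/P_Y(y)>\log K$ is satisfied uniquely and outputs $0$ (erasure) otherwise. To each identification message $i=1,\ldots,N$ I would then assign a random subset $\cC_i$ of the base codebook of size $\lceil M/c\rceil$; the stochastic encoder $P_i$ is taken to be uniform over $\cC_i$, and the acceptance region is $\cD_i=\{y:\phi(y)\in\cC_i\}$.

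For the type I error, the event that message $i$ is rejected when actually sent splits into (a) the base decoder declaring erasure, whose probability on a transmitted codeword is controlled by the information-spectrum tail $p_0:=\Pr(\log W(Y|X)/P_Y(Y)\le\log K)$, and (b) likelihood confusion with a codeword outside $\cC_i$. A two-layer Markov-style inequality with multipliers $a$ and $b$ separating the codebook randomness from the subset randomness, together with $1/a+1/a'<1$, yields an averaged bound $\mathbb{E}[\varepsilon]\le ab\cdot p_0$, from which a single good realization is extracted.

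For the type II error, $P_i W(\cD_j)$ for $i\neq j$ decomposes into (a) the contribution from outputs that decode into $\cC_j\cap\cC_i^c$ via the base decoder operating on a codeword of $\cC_i$, bounded by $\kappa$ provided the subset family has pairwise intersection fraction at most $\kappa$, and (b) the contribution from outputs lying in $\cD_j$ that were reached without the base decoder truly pointing into $\cC_i$, which is bounded by a change of measure to $P_Y$ using the complementary threshold $\log W(y|x)/P_Y(y)\le\log K$; weighting by $|\cC_j|\le\lceil M/c\rceil$ and applying the complementary two-layer Markov inequality with multipliers $a',b'$ (using $c=1-1/b-1/b'>0$) produces the second summand $a'b'\cdot\lceil M/c\rceil/K$.

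The main obstacle is the combinatorial existence of a family of $N=\lfloor e^{\tau M}/(Me)\rfloor$ subsets of the base codebook whose pairwise intersection fraction is at most $\kappa$. A standard probabilistic argument that draws each $\cC_i$ independently and uniformly, then applies a union bound over $\binom{N}{2}$ pairs, succeeds precisely when $\kappa\log(1/\tau-1)>\log 2+1$; the constraints $0<\tau<1/3$ and $0<\kappa<1$ make the exponential concentration rate of the large-intersection event dominate the doubly exponential number of pairs, so this is where \eqref{eq:constrants-parameter-1} enters. Given this combinatorial ingredient, the final step is to intersect the good-base-codebook, good-subset-family, and averaged-error events, and then invoke the feasibility hypothesis that the averaged total error probability is strictly less than $1$ to extract a single deterministic $(N,\varepsilon,\delta)$-ID code meeting all the claimed bounds.
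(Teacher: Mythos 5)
The paper itself does not prove this lemma; it is imported verbatim from the cited reference, so your outline has to be judged against the construction used there (random base codebook, information-spectrum threshold decoding, and an Ahlswede--Dueck family of nearly disjoint subsets), which is indeed the template you chose. But as written your construction has a genuine inconsistency in how the parameters enter. Since $c=1-1/b-1/b'\in(0,1)$, we have $\lceil M/c\rceil>M$, so ``a subset of the $M$-codeword base codebook of size $\lceil M/c\rceil$'' does not exist. In the actual proof the base codebook has $\lceil M/c\rceil$ codewords; the multipliers $a,a'$ with $1/a+1/a'<1$ are Markov parameters extracting one codebook that is simultaneously good for the two codebook-averaged criteria (the miss probability $\Pr(\log W(Y|X_k)/P_Y(Y)\le\log K\mid X_k)$ and the confusion quantity $\sum_{l\neq k}W(\cE_l|X_k)$, whose average is at most $\lceil M/c\rceil/K$ by change of measure), and $b,b'$ with $c>0$ govern a per-codeword expurgation inside that codebook leaving at least $c\lceil M/c\rceil\ge M$ codewords, each of which satisfies the bounds $ab\Pr(\cdot)$ and $a'b'\lceil M/c\rceil/K$ individually. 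This expurgation is precisely what converts average-over-messages bounds into the max-over-messages bounds that the definition of an ID code requires, and it is absent (or garbled into ``separating codebook randomness from subset randomness'') in your outline. The ID subsets are then chosen among the $M$ surviving codewords with size governed by $\tau$ (about $\tau M$), not $\lceil M/c\rceil$; the factor $\lceil M/c\rceil/K$ in the type II bound comes from the union bound over the whole base codebook inside the expurgation criterion, not from the subset cardinality.

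Two further steps would fail as stated. First, your decoder demands that the threshold test be passed \emph{uniquely}, with erasure otherwise; this adds to the type I error the probability that some other codeword also passes the test, a term of order $\lceil M/c\rceil/K$ which is not present in the claimed bound $\ep\le ab\Pr(\log W(Y|X)/P_Y(Y)\le\log K)$. One must instead take $\cD_i$ to be the plain (non-disjoint) union of the threshold sets over $k\in\cC_i$ --- acceptance regions of an ID code need not be disjoint --- so that the only type I event is the transmitted codeword failing its own test. Second, the combinatorial ingredient: a union bound over $\binom{N}{2}$ pairs of independently drawn subsets, with large-intersection probability roughly $\exp\bigl(-\tau M[\kappa\log(1/\tau-1)-\log 2]\bigr)$, yields under the hypothesis $\kappa\log(1/\tau-1)>\log 2+1$ only $N$ of order $e^{\tau M/2}$, not the claimed $N=\lfloor e^{\tau M}/(Me)\rfloor$; reaching the latter requires the maximal-family (greedy) counting argument of Ahlswede--Dueck/Han--Verd\'u. (Also, in this single-shot parametrization the number of pairs is exponential in $M$, not doubly exponential; the double exponential appears only after substituting $M\approx e^{nR}$.) So while the high-level template is right, the roles of $c$, of the expurgation, of the decoder, and of the subset-selection lemma all need to be repaired before the stated bounds come out.
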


Now, we go back to the proof of achievability. 
For a given $0<\ep<1$, fix a capacity achieving input distribution $P_X$ that attains $U_\ep(W)$;
then, let $P_Y$ be the corresponding output distribution of channel $W$.
By setting $a=b=1+\frac{2}{n}$, $a^\prime=b^\prime=(n+2)$, $\tau=\frac{1}{n+2}$, and $\kappa=\frac{1+\log 2}{\log n}$,
we can verify that the conditions in \eqref{eq:constrants-parameter-1} and \eqref{eq:constrants-parameter-2} are satisfied for $n\ge 2$. 
For $R>0$, we apply Lemma \ref{lemma:single-shot-achievability} by setting $K=e^{nR}$ and $M = \lceil e^{nR}/(n+2)^4 \rceil$;
then, there exist a constant $F>0$ and a sequence of $(N_n,\ep_n,\delta_n)$-ID codes such that
\begin{align*}
\frac{1}{n} \log \log N_n \ge R - \frac{F}{n}\log n
\end{align*}
and
\begin{align*}
\ep_n &\le \bigg( 1 + \frac{2}{n} \bigg)^2 \Pr\bigg( \frac{1}{n} \log \frac{W^n(Y^n|X^n)}{P_Y^n(Y^n)} \le R \bigg), \\
\delta_n &\le \frac{1+\log 2}{\log n} + \frac{2}{n+2}
\end{align*}
provided that
\begin{align} \label{eq:constrants-parameter-4}
\bigg( 1 + \frac{2}{n} \bigg)^2 \Pr\bigg( \frac{1}{n} \log \frac{W^n(Y^n|X^n)}{P_Y^n(Y^n)} \le R \bigg) + \frac{2}{n+2} < 1,
\end{align}
where $(X^n, Y^n) \sim P_X^n \times W^n$.
Here, set 
\begin{align*}
R = C(W) + \sqrt{\frac{U_\ep(W)}{n}} \Phi^{-1}(\ep).
\end{align*}
Then, by applying the central limit theorem, we have
\begin{align*}
\lim_{n\to\infty} \Pr\bigg( \frac{1}{n} \log \frac{W^n(Y^n|X^n)}{P_Y^n(Y^n)} \le R \bigg) = \ep.
\end{align*}
Thus, the condition in \eqref{eq:constrants-parameter-4} is satisfied for sufficiently large $n$,
and there exists a sequence of $(N_n,\ep_n,\delta_n)$-ID codes satisfying \eqref{eq:definition-second-order-1}-\eqref{eq:definition-second-order-3} 
for $L=\sqrt{U_\ep(W)}\Phi^{-1}(\ep)$ and an arbitrary $\delta>0$. Thus, we have
\begin{align*}
L_{\mathtt{ID}}(\ep|W) &\ge \sqrt{U_\ep(W)}\Phi^{-1}(\ep) \\
&= \sqrt{V_\ep(W)}\Phi^{-1}(\ep),
\end{align*}
which completes the proof of the achievability part of Theorem \ref{theorem:second-order-rate}. \qed

\subsection{Proof of converse}

By Corollary \ref{corollary:single-shot-converse-Ds-expression} and the symbol-wise relaxation \eqref{eq:symbol-wise-relaxation}, we have
\begin{align}
\log\log N^\star(\ep,\delta|W^n) \le \inf_{Q_n \in {\cal P}({\cal Y}^n)} \max_{x^n \in \cX^n}
D_{\mathtt{s}}^{\ep+\delta+\eta}(W^n(\cdot|x^n ) \| Q_n)
+ \log \log |{\cal X}^n| + 2 \log(1/\eta) + 2. 
\label{eq:second-order-converse-proof-1}
\end{align}
Since the terms other than the first one in \eqref{eq:second-order-converse-proof-1} are
$o(\sqrt{n})$, the remaining task is to evaluate the first term of \eqref{eq:second-order-converse-proof-1}
for an appropriate choice of the output distribution $Q_n$. 
For the purpose of deriving the second-order rate, it suffices to choose a mixture 
of the capacity achieving output distribution and output distributions induced from types on $\cX^n$ \cite{hayashi:09}. 
Although it is more than necessary to derive the second-order rate, we refer to a stronger result that is derived 
by a more sophisticated choice of the output distribution \cite{TomTan:13}.
\begin{lemma}[\cite{TomTan:13}] \label{lemma:converse-evaluation-D}
Suppose that $V_{\ep+\delta}>0$. For $\eta=1/\sqrt{n}$, there exists a constant $F$ such that 
\begin{align*}
\inf_{Q_n \in {\cal P}({\cal Y}^n)} \max_{x^n \in \cX^n} D_{\mathtt{s}}^{\ep+\delta+\eta}(W^n(\cdot|x^n ) \| Q_n) \le n C(W) + \sqrt{n V_{\ep+\delta}} \Phi^{-1}(\ep+\delta) + F
\end{align*} 
for sufficiently large $n$. 
\end{lemma}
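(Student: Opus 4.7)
The plan is to construct an explicit auxiliary output distribution $Q_n$ for which the inner objective $D_{\mathtt{s}}^{\ep+\delta+\eta}(W^n(\cdot|x^n) \| Q_n)$ can be controlled \emph{uniformly} in $x^n$, following the mixture-of-types device that underpins second-order converse bounds for DMCs. Concretely, let $\cP_n$ denote the set of $n$-types on $\cX$ and set
\begin{align*}
Q_n(y^n) := \frac{1}{|\cP_n|} \sum_{P \in \cP_n} (PW)^n(y^n).
\end{align*}
Since $|\cP_n| \le (n+1)^{|\cX|}$, the mixture weight $1/|\cP_n|$ is only polynomially small, so $\log |\cP_n| = O(\log n) = o(\sqrt{n})$. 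For any $x^n \in \cX^n$ of empirical type $P_{x^n}$, we have $Q_n \ge (P_{x^n}W)^n / |\cP_n|$ pointwise, and consequently
\begin{align*}
\log \frac{W^n(y^n|x^n)}{Q_n(y^n)} \le \log \frac{W^n(y^n|x^n)}{(P_{x^n}W)^n(y^n)} + \log |\cP_n|.
\end{align*}

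This reduces the task to bounding $D_{\mathtt{s}}^{\ep+\delta+\eta}(W^n(\cdot|x^n) \| (P_{x^n}W)^n)$ uniformly in $x^n$, up to an $O(\log n)$ additive slack. For fixed $x^n$ of type $P$, the random variable $\log \frac{W^n(Y^n|x^n)}{(PW)^n(Y^n)}$ under $Y^n \sim W^n(\cdot|x^n)$ is a sum of $n$ independent bounded summands whose aggregate mean is $nI(P,W)$ and whose aggregate variance is $nV(W\|PW|P)$. A direct application of the Berry-Esseen theorem---whose third-absolute-moment hypothesis is automatic for finite alphabets---then yields, for a constant $K$ depending only on $W$,
\begin{align*}
D_{\mathtt{s}}^{\alpha}\bigl( W^n(\cdot|x^n) \,\|\, (P_{x^n}W)^n \bigr) \le n I(P,W) + \sqrt{n\, V(W\|PW|P)}\; \Phi^{-1}\bigl(\alpha - K/\sqrt{n}\bigr)
\end{align*}
whenever the variance is strictly positive. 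Taking $\alpha = \ep+\delta+\eta$ with $\eta = 1/\sqrt{n}$ absorbs the $K/\sqrt{n}$ correction into an $O(1)$ additive constant via the local Lipschitz property of $\Phi^{-1}$ near $\ep+\delta$.

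The main obstacle is then to show that the maximum of the right-hand side over $P \in \cP_n$ matches the target $nC(W) + \sqrt{n V_{\ep+\delta}}\,\Phi^{-1}(\ep+\delta) + O(1)$, i.e., that the effective supremum is attained near the set $\Pi(W)$ of capacity-achieving inputs with the correct choice of conditional variance. I would split into two regimes. For types $P$ at distance at least $n^{-1/4}$ from $\Pi(W)$, a standard quadratic deficiency bound $C(W) - I(P,W) \gtrsim \mathrm{dist}(P, \Pi(W))^2$---available from strict concavity of $I(\cdot, W)$ transverse to $\Pi(W)$---forces $nI(P,W) \le nC(W) - \Omega(\sqrt{n})$, dominating the $O(\sqrt{n})$ variance contribution. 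For types $P$ within $n^{-1/4}$ of some $P^* \in \Pi(W)$, continuity of $P \mapsto PW$ together with continuity of $(P,Q) \mapsto V(W\|Q|P)$ yields $V(W\|PW|P) = V(W\|P_Y^*|P^*) + o(1)$, using the uniqueness of the capacity-achieving output $P_Y^*$. Optimizing $\sqrt{V(W\|P_Y^*|P^*)}\,\Phi^{-1}(\ep+\delta)$ over $P^* \in \Pi(W)$---choosing the minimum variance when $\Phi^{-1}(\ep+\delta)<0$ and the maximum when $\Phi^{-1}(\ep+\delta)\ge 0$---recovers exactly $\sqrt{V_{\ep+\delta}}\,\Phi^{-1}(\ep+\delta)$ by the sign-dependent definition of $V_{\ep+\delta}$. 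Combining both regimes with the $O(\log n)$ slack from the mixture construction closes the argument with a finite constant $F$.
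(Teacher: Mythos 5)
The paper does not prove this lemma at all; it is stated as a citation to Tomamichel and Tan \cite{TomTan:13}, who establish a \emph{third-order} optimal converse via a carefully tuned output distribution. Just before the lemma, the paper explicitly notes that the simpler mixture-of-types construction of Hayashi \cite{hayashi:09} would suffice for the second-order rate. Your proposal is precisely that simpler route, so you have supplied the self-contained elementary proof the paper waves at but omits. What your approach buys is elementarity and self-containedness (only types, Berry--Esseen, and a two-regime split); what the citation to \cite{TomTan:13} buys is a sharper $O(\log n)$ residual (third-order accuracy), which the present paper does not need.

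A few points in your sketch need more care before it is airtight. First, the Berry--Esseen constant you invoke is not uniform over all types $P$: it scales like the third absolute moment divided by the variance to the power $3/2$, and for types $P$ with small $V(W\|PW|P)$ it blows up. You need to handle small-variance types separately, e.g., by observing that for such $P$ the log-likelihood ratio concentrates near $nI(P,W)$ and that $I(P,W)\le C(W)$ (together with the first-order deficiency when $P$ is far from $\Pi(W)$) already gives the bound; the hypothesis $V_{\ep+\delta}>0$ guarantees the variance is bounded away from zero only near $\Pi(W)$, not globally. Second, at the crossover scale $\mathrm{dist}(P,\Pi(W))\asymp n^{-1/4}$, both the first-order deficiency $n\bigl(C(W)-I(P,W)\bigr)$ and the variance mismatch contribute terms of order $\sqrt{n}$ with competing constants; the argument closes because in that regime continuity forces $V(W\|PW|P)$ within $o(1)$ of $\{V(W\|P_Y^*|P^*):P^*\in\Pi(W)\}$, so the variance mismatch is actually $o(\sqrt{n})$, but that observation should be made explicit rather than left implicit in "dominates." Third, you should note that replacing $\Phi^{-1}(\ep+\delta+\eta-K/\sqrt{n})$ by $\Phi^{-1}(\ep+\delta)+O(1/\sqrt{n})$ uses that $\Phi^{-1}$ is locally Lipschitz on the open interval $(0,1)$, which requires $0<\ep+\delta<1$ (satisfied here). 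With those clarifications, your argument is a valid proof of the lemma via the Hayashi-style mixture, and it is the route the paper itself flags as sufficient.
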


By \eqref{eq:second-order-converse-proof-1} and Lemma \ref{lemma:converse-evaluation-D}, we have
\begin{align*}
L_{\mathtt{ID}}(\ep,\delta|W) \le \sqrt{V_{\ep+\delta}} \Phi^{-1}(\ep+\delta).
\end{align*}
Finally, by taking the limit of $\delta \to 0$, we have the converse part of Theorem \ref{theorem:second-order-rate}. \qed

\section{Discussion}

In this paper, we have derived a minimax converse bound for the identification via channel.
By using this converse bound, we have derived the general formula for the identification capacity without the assumption
of the strong converse property; the problem has been unsolved for a long time.  
Our converse is built upon the partial channel resolvability introduced in \cite{steinberg:98}.
When we derive the converse bound for the identification code using the channel resolvability, a crucial observation
is the counting argument in which the number of messages of the identification code is bounded by the number of $M$-types. 
Even though the partial channel resolvability approach have had a potential to improve the bound based on 
the channel resolvability, there was a difficulty that the counting argument does not work for the partial channel resolvability, at least without an additional trick;
cf.~Remark \ref{remark:proof-argument}.
We have overcome this difficulty by utilizing the auxiliary output distribution, 
the idea that has become popular in the past decade. 
As a future direction, it is tempting to apply the auxiliary output distribution approach
to other problems of identification code.

\section*{Acknowledgment}

The author would like to thank Yasutada Oohama for a fruitful discussion. This work was supported in part by
JSPS KAKENHI under Grant 20H02144.

\bibliographystyle{./IEEEtranS}
\bibliography{../../../09-04-17-bibtex/reference}

\end{document}